\def\BibTeX{{\rm B\kern-.05em{\sc i\kern-.025em b}\kern-.08em
    T\kern-.1667em\lower.7ex\hbox{E}\kern-.125emX}}
\definecolor{darkgreen}{rgb}{0.0,0.5,0.0}
\newtheoremstyle{newstyle1}{}{}{\itshape}{}{}{}{0em}{\textsc{\thmname{#1}}\thmnumber{ #2}. }
\theoremstyle{newstyle1}
\newtheorem{definition}{Definition}
\newtheorem{algorithm}{Algorithm}
\newtheorem{observation}{Observation}
\newtheorem{lemma}{Lemma}
\newtheorem{corollary}{Corollary}
\newtheorem{theorem}{Theorem}
\newcommand{\imped}{impedensable\xspace}
\newcommand{\Imped}{Impedensable\xspace}
\newcommand{\gmrit}{GMRIT\xspace}
\begin{document}

\title{Tolerance to Asynchrony of an Algorithm for Gathering Myopic Robots on an Infinite Triangular Grid\thanks{\textbf{Technical report of the paper to appear in the 19th European Dependable Computing Conference (EDCC 2024)}.}}

\author{\IEEEauthorblockN{Arya Tanmay Gupta and Sandeep S Kulkarni}
\IEEEauthorblockA{\textit{Computer Science and Engineering, Michigan State University} \\
\texttt{\{atgupta,sandeep\}@msu.edu}}
}

\maketitle

\begin{abstract}
In this paper, we study the problem of gathering distance-1 myopic robots on an infinite triangular grid. We show that the algorithm developed by Goswami et al. (SSS, 2022) is lattice-linear (cf. Gupta and Kulkarni, SRDS 2023). This implies that a distributed scheduler, assumed therein, is not required for this algorithm: it runs correctly in asynchrony. It also implies that the algorithm works correctly even if the robots are equipped with a unidirectional \textit{camera} to see the neighbouring robots (rather than an omnidirectional one, which would be required under a distributed scheduler). Due to lattice-linearity, we can predetermine the point of gathering. We also show that this algorithm converges in $2n$ rounds, which is lower than the complexity ($2.5(n+1)$ rounds) that was shown in Goswami et al.
\end{abstract}

\begin{IEEEkeywords}
asynchrony, robot gathering, infinite triangular grid, lattice-linear, bounding polygon, convergence time
\end{IEEEkeywords}

\section{Introduction}

A parallel processing system contains multiple processes running to solve one problem. 
Each process may store some data. While the system executes, all processes perform executions based on the information that they read from each other. The goal of these processes is to reach a state where the problem is deemed solved. 
To ensure correct execution, synchronization primitives are deployed.

Synchronization ensures that the processes read the latest copy of data.
In some of the synchronization models, a selected process acts as a scheduler for the rest of the processes. A \textit{scheduler/daemon} is a process whose function is to choose one, some, or all processes in a time step, throughout the execution, so that the selected processes can evaluate their guards and take the corresponding action. A \textit{central scheduler} chooses only one process per time step. A \textit{distributed scheduler} one or more processes, possibly arbitrarily, per time step. A \textit{synchronous scheduler} chooses all the processes in each time step.

Enforcing synchronization introduces an overhead, which can be very costly in terms of computational resources and time. For this reason, the community is interested in developing algorithms that require minimal synchronization. The recent introduction of lattice-linearity has shown that algorithms can be developed in such a way that the processes can execute asynchronously. 
Concepts of lattice-linearity can be used to develop new lattice-linear algorithms or to study if existing algorithms exploit lattice-linearity.
If existing algorithms can exploit lattice-linearity, it would imply that we can eliminate the synchronization assumptions -- thereby saving the time and resources involved in synchronization during their execution. 
For example, lattice-linearity was found to be exploited by Gale-Shapley's algorithm for stable marriage \cite{Garg2020}, Johnson's algorithm for shortest paths in graphs \cite{Garg2020} and by Gale’s top trading cycle algorithm for housing market \cite{Garg2021}. We introduced lattice-linear algorithms in \cite{Gupta2023} for problems that contain multiple optimal states, whereas the model studied in \cite{Garg2020} only allows one optimal state.

In this paper, we study the problem of gathering distance-1 myopic robots on an infinite triangular grid. Authors of \cite{Goswami2022} showed this algorithm to work correctly under a distributed scheduler. We demonstrate that this algorithm is lattice-linear.
Hence, it will run correctly even if the robots run in asynchrony. 

\subsection{Contributions of the paper}

\begin{itemize}
    \item We show that the algorithm developed by Goswami et al. \cite{Goswami2022} is lattice-linear. Hence, this algorithm will run correctly even if the robots run in asynchrony where the robots can execute on old information about other robots.
    Because of lattice-linearity, this algorithm works correctly even if the robots are equipped with a unidirectional \textit{camera} to see neighbouring robots.
    Authors of \cite{Goswami2022} assumed a distributed scheduler, which would require an omnidirectional camera, capable to get fresh values from all neighbouring locations.
    \item Lattice-linearity follows that the moves of the robots are predictable. 
    This allows us to show tighter bounds to the arena traversed by the robots under the algorithm.
    As a consequence of tighter bounds on this arena, (1) we obtain a better convergence time bound for this algorithm, which is lower than that showed in \cite{Goswami2022}, and (2) we show that the gathering point of the robots can be uniquely determined from the initial, or any intermediate, state. We show that this algorithm converges in $2n$ rounds, which is lower as compared to the time complexity bound ($2.5(n+1)$ rounds) shown in \cite{Goswami2022}.
\end{itemize}

\subsection{Organization of the paper}

In \Cref{section:preliminaries}, we describe the notations and definitions that we utilize in this paper. In \Cref{section:gmrit}, we describe the problem of gathering distance-1 myopic robots on an infinite triangular grid. In \Cref{section:algorithm-gsgs}, we describe the lattice-linearity of the algorithm in \cite{Goswami2022}; we also show some tighter bounds on the properties of this algorithm.
We provide a revised algorithm for this problem in \Cref{section:gsgs-new-algo}.
We discuss the related work in \Cref{section:literature} and conclude in \Cref{section:conclusion}.

\section{Preliminaries}\label{section:preliminaries}

A parallel/distributed algorithm consists of nodes where each node is associated with a set of variables. A \textit{global state}, say $s$, is obtained by assigning each variable of each node a value from its respective domain. $s$ is represented as a vector, where $s[i]$ itself is a vector of the variables of node $i$.
$S$ denotes the \textit{state space}, which is the set of all global states that a given system can obtain. 

Each node is associated with actions. Each action at node $i$ checks the values of other nodes and updates its own variables. A \textit{rule} is of the form $g\longrightarrow a_c$, where $g$ is the \textit{guard} (a Boolean predicate involving variables of $i$ and other nodes) and $a_c$ is an \textit{instruction/action} that updates the variables of $i$
atomically.

A \textit{move} is an event where a node 
changes its state when one of its guards evaluates to true. A \textit{round} is a sequence of events in which every node evaluates its guards at least once, and makes a move accordingly.

An algorithm $A$ is \textit{self-stabilizing} with respect to the subset $S_o$ of $S$ iff (1) \textit{convergence}: starting from an  arbitrary state, any sequence of computations of $A$ reaches a state in $S_o$, and (2) \textit{closure}: any computation of $A$ starting from $S_o$ always stays in $S_o$. 
We assume $S_o$ to be the set of \textit{optimal} states: the system is deemed converged once it reaches a state in $S_o$. $A$ is a \textit{silent} self-stabilizing algorithm if no node makes a move once a state in $S_o$ is reached.

\subsection{Execution without Synchronization}

Typically, we view the \textit{computation} of an algorithm as a sequence of global states $\langle s_0, s_1, \cdots\rangle$, where $s_{t+1}, t\geq 0,$ is obtained by executing some action by one or more nodes in $s_t$.  
For the sake of discussion, assume that only node $i$ executes in state $s_t$. 
The computation prefix uptil $s_{t}$ is $\langle s_0, s_1, \cdots, s_t\rangle$. The state that the system traverses to after $s_t$ is $s_{t+1}$.
Under proper synchronization, $i$ would evaluate its guards on the \textit{current} local states of its neighbours in $s_t$.

To understand the execution in asynchrony, let $x(s)$ be the value of some variable $x$ 
in state $s$. 
If $i$ executes in asynchrony, then it views the global state that it is in to be $s'$, 
where $x(s')\in\{ x(s_0), x(s_1), \cdots, x(s_t) \}.$
In this case, $s_{t+1}$ is evaluated as follows.
If all guards in $i$ evaluate to false, then the system will continue to remain in state $s_t$, i.e., $s_{t+1} = s_{t}$.
If a guard $g$ evaluates to true then $i$ will execute its corresponding action $a_c$.
Here, we have the following observations:
(1) $s_{t+1}[i]$ is the state that $i$ obtains after executing an action in $s'$, and (2) $\forall j\neq i$: $s_{t+1}[j] = s_t[j]$.

The model described above allows nodes to read old values of other nodes arbitrarily. In this paper, however, we require that the nodes (robots) take a \textit{snapshot} of their surroundings after every move. However, doing so does not require synchronization. 

\subsection{Embedding a $\prec$-lattice among Global States}

Let $s$ denote a global state, and let $s[i]$ denote the state of node $i$ in $s$. First, we define a total order $\prec_l$; all local states of a node $i$ are totally ordered under $\prec_l$. 
Using $\prec_l$, we define a partial order $\prec_g$ among global states as follows. 

We say that $s \prec_g s^\prime$ iff $(\forall i: s[i]=s'[i]\lor s[i]\prec_l s'[i]) \land (\exists i:s[i]\prec_ls'[i])$.
Also, $s=s'$ iff $\forall i: s[i] = s'[i]$. 
For brevity, we use $\prec$ to denote $\prec_l$ and $\prec_g$: $\prec$ corresponds to $\prec_l$ while comparing local states, and $\prec$ corresponds to $\prec_g$ while comparing global states. 
We also use the symbol `$\succ$' which is such that $s\succ s'$ iff $s' \prec s$.
Similarly, we use symbols `$\preceq$' and `$\succeq$'; e.g., $s\preceq s'$ iff  $s=s' \vee s \prec s'$.
We call the lattice, formed from such partial order, a \textit{$\prec$-lattice}.

\begin{definition}\label{definition:<-lattice}
    $\prec$-\textit{lattice}. 
    Given a total relation $\prec_l$ that orders the states visited by a node $i$ (for each $i$) the $\prec$-lattice corresponding to $\prec_l$ is defined by the following partial order:
    $s \prec_g s'$ iff $(\forall i \ \ s[i] \preceq_l s'[i]) \wedge (\exists i \ \ s[i] \prec_l s'[i])$.
\end{definition}

In the $\prec$-lattice discussed above, we can define the meet and join of two states in the standard way: the meet (respectively, join), of two states $s_1$ and $s_2$ is a state $s_3$ where $\forall i, s_3[i]$ is equal to $min(s_1[i], s_2[i])$ (respectively, $max(s_1[i], s_2[i])$), where $\min(x, y) = \min(y, x)=x$ iff $(x\prec_l y \lor x=y)$, and 
$\max(x, y) = \max(y, x)=y$ iff $(y\succ_l x \lor y=x)$.
For $s_1$ and $s_2$, their meet (respectively, join) has paths to (respectively, is reachable from) both $s_1$ and $s_2$.

A $\prec$-lattice, embedded in the state space, is useful for permitting the algorithm to execute asynchronously.
Under proper constraints on the structure of $\prec$-lattice, convergence can be ensured. 

\subsection{Lattice-Linear Problems}

A \textit{lattice-linear problem} $P$ can be represented by a predicate $\mathcal{P}$ such that if any node $i$ is violating $\mathcal{P}$ in a state $s$, then it must change its state. Otherwise, the system will not satisfy $\mathcal{P}$.
Let $\mathcal{P}(s)$ be true iff state $s$ satisfies $\mathcal{P}$. A node violating $\mathcal{P}$ in $s$ is called an \textit{\imped} node (an \textit{impediment} to progress if does not execute, \textit{indispensable} to execute for progress). Formally,

\begin{definition}\label{definition:impedensable-node}\cite{Garg2020} \textit{\Imped node.} $\textsc{\Imped}(i,s,\mathcal{P})\equiv \lnot \mathcal{P}(s)$ $\land$ $(\forall s'\succ s:s'[i]=s[i]\implies\lnot \mathcal{P}(s'))$. \end{definition}

If a node $i$ is \imped in state $s$, then in any state $s'$ such that $s'\succ s$, if the state of $i$ remains the same, then the algorithm will not converge.
Thus, predicate $\mathcal{P}$ induces a total order among the local states visited by a node, for all nodes. Consequently, the discrete structure that gets induced among the global states is a $\prec$-lattice, as described in \Cref{definition:<-lattice}. 
We say that $\mathcal{P}$, satisfying \Cref{definition:impedensable-node}, is \textit{lattice-linear} with respect to that $\prec$-lattice.

There can be multiple arbitrary lattices that can be induced among the global states. A system cannot guarantee convergence while traversing an arbitrary lattice. To guarantee convergence, we design the predicate $\mathcal{P}$ such that it fulfils some properties, and guarantees reachability to an optimal state. $\mathcal{P}$ is used by the nodes to determine if they are \imped, using \Cref{definition:impedensable-node}.
Thus, in any suboptimal global state, there will be at least one \imped node. Formally,

\begin{definition}\label{definition:ll-predicate}\cite{Garg2020}\textit{Lattice-linear predicate.}
    $\mathcal{P}$ is a lattice-linear predicate with respect to a $\prec$-lattice induced among the global states iff $\forall s\in S: \lnot\mathcal{P}(s) \Rightarrow \exists i:\textsc{\Imped}(i,s,\mathcal{P})$.
\end{definition}

Now we complete the definition of lattice-linear problems. In a lattice-linear problem $P$, given any suboptimal global state $s$, $P$ specifies all and the only nodes which cannot retain their local states. 
$\mathcal{P}$ is thus designed conserving this nature of the subject problem $P$, following Definitions \ref{definition:impedensable-node} and \ref{definition:ll-predicate}.

\begin{definition}\label{definition:ll-problem}
    \textit{Lattice-linear problems}.
    A problem $P$ is lattice-linear 
    iff there exists a predicate $\mathcal{P}$ and a $\prec$-lattice such that
    
    \begin{itemize}
        \item $P$ is deemed solved iff the system reaches a state where $\mathcal{P}$ is true,
        \item $\mathcal{P}$ is lattice-linear with respect to the $\prec$-lattice induced among the states in $S$, i.e., $\forall s: \neg \mathcal{P}(s) \Rightarrow \exists i:\textsc{\Imped}(i,s,\mathcal{P})$, and
        \item $\forall s:(\forall i:\textsc{\Imped}(i,s,\mathcal{P})\Rightarrow (\forall s':\mathcal{P}(s')\Rightarrow s'[i]\neq s[i]))$.
    \end{itemize}
\end{definition}

\noindent\textit{Remark:} A $\prec$-lattice, induced under $\mathcal{P}$, allows asynchrony because if a node, reading old values, reads the current state $s$ as $s'$, then $s'\prec s$. So $\lnot\mathcal{P}(s')\Rightarrow \lnot\mathcal{P}(s)$ because $\textsc{\Imped}(i,s',\mathcal{P})$ and $s'[i]=s[i]$.

\begin{definition}\label{definition:ssll-problem}
    \textit{Self-stabilizing lattice-linear predicate}.
    Continuing from \Cref{definition:ll-problem},
    $\mathcal{P}$ is a self-stabilizing lattice-linear predicate if and only if the supremum of the lattice, that $\mathcal{P}$ induces, is an optimal state, i.e., $\mathcal{P}(supremum(S))=true$.
\end{definition}

\noindent Note that a self-stabilizing lattice-linear predicate $\mathcal{P}$ can also be true in states other than the supremum of the $\prec$-lattice. 

\subsection{Lattice-Linear Algorithms}\label{subsection:lla}

Certain problems are \textit{non-lattice-linear problems}. In those problems, given a suboptimal global state, the problem does not specify a specific set of nodes to change their state. In such problems, there are instances in which the \imped nodes cannot be determined naturally, i.e., in those instances
$\exists s :\lnot\mathcal{P}(s) \wedge   (\forall i : \exists s' : \mathcal{P}(s')\land s[i]=s'[i]$).
For such problems, $\prec$-lattices may be induced algorithmically, through \textit{lattice-linear algorithms}.

\begin{definition}\label{definition:ll-algos}\cite{Gupta2022,Gupta2023}\textit{Lattice-linear algorithms (LLA)}.
    Algorithm $A$ is an LLA for a problem $P$, iff there exists a predicate $\mathcal{P}$ and $A$ induces a $\prec$-lattice among the states of $S_1, ..., S_w \subseteq S (w\geq 1)$, such that
    \begin{itemize}
        \item State space $S$ of $P$ contains mutually disjoint lattices, i.e.
        \begin{itemize}
            \item $S_1, S_2, \cdots, S_w\subseteq S$ are pairwise disjoint.
            \item $S_1 \cup S_2 \cup \cdots \cup S_w$ contains all the reachable states (starting from a set of initial states, if specified; if an arbitrary state can be an initial state, then $S_1 \cup S_2 \cup \cdots \cup S_w=S$).
        \end{itemize}
        \item Lattice-linearity is satisfied in each subset under $\mathcal{P}$, i.e., 
        \begin{itemize}
            \item $P$ is deemed solved iff the system reaches a state where $\mathcal{P}$ is true
            \item $\forall k$, $1 \leq k \leq w$, 
            $\mathcal{P}$ is lattice-linear with respect to the partial order induced in $S_k$ by $A$, i.e., $\forall s\in S_k: \lnot\mathcal{P}(s) \Rightarrow \exists i \ \
            \textsc{\Imped}(i,s,\mathcal{P})$.
        \end{itemize}
    \end{itemize}
\end{definition}

In the next section, we discuss how the gathering problem of distance-1 myopic robots is not a lattice-linear problem. The lattice-linear algorithm developed for this problem imposes the lattice structure among the global states, that are reachable by the system of robots.

\begin{definition}\textit{Self-stabilizing LLA}.
    From \Cref{definition:ll-algos}, $A$ is self-stabilizing only if $S_1 \cup S_2 \cup \cdots \cup S_w=S$ and $\forall k:1\leq k\leq w$, $\mathcal{P}(supremum(S_k))=true$.
\end{definition}

\section{Gathering of Distance-1 Myopic Robots on Infinite Triangular Grid (\gmrit)}\label{section:gmrit}

This paper focuses on the problem where the input is a swarm of robots
with minimal capabilities. 
Each robot is present at a vertex on an infinite triangular grid. 
In the initial global state, the robots form a connected graph on the underlying grid. The robots agree on an axis (i.e. a direction and its orientation).
The robots can move only across one edge at a time.
Each robot is myopic, i.e., it can only sense if another robot is present at an adjacent vertex.
Robots do not have an abiliy to \textit{communicate} with each other. 
Under these constraints, it is required that all robots gather at one point.

\newcounter{diags}
\newcounter{rdiag}
\newcounter{vert}
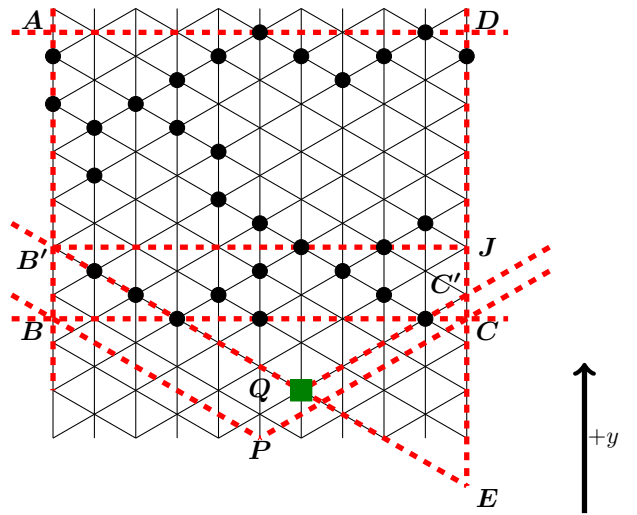
\begin{figure}[ht]
    \centering
    \subfigure{
    \begin{tikzpicture}[x=1.1cm,y=0.6351cm] 
        \foreach \diag in {0,...,4}{
            \draw (\diag,0) -- (5,-5+\diag);
        }
        \foreach \diag in {8,...,5}{
            \draw (0,-9+\diag) -- (5,\diag-14);
        }

        \foreach \diag in {4,...,1}{
            \draw (0,-9+\diag) -- (\diag,-9);
        }
        
        \foreach \rdiag in {0,...,4}{
            \draw (\rdiag,-9) -- (5,-\rdiag-4);
        }
        \foreach \rdiag in {8,...,5}{
            \draw (0,-\rdiag) -- (5,5-\rdiag);
        }
        \foreach \rdiag in {4,...,1}{
            \draw (0,-\rdiag) -- (\rdiag,0);
        }
        
        \foreach \vert in {0,...,10}{
            \draw (\vert*.5,0) -- (\vert*.5,-9);
        }

        \draw[dashed, red, line width=2pt] (0,0) -- (0,-8);
        \draw[dashed, red, line width=2pt] (5,0) -- (5,-10);

        \draw[dashed, red, line width=2pt] (-.5,-.5) -- (5.5,-.5);
        \draw[dashed, red, line width=2pt] (-.5,-6.5) -- (5.5,-6.5);
        
        \draw[dashed, red, line width=2pt] (-.5,-4.5) -- (3,-8) -- (5,-10);
        \draw[dashed, red, line width=2pt] (6,-5) -- (3,-8);
        \draw[dashed, red, line width=2pt] (-.5,-6) -- (2.5,-9);
        \draw[dashed, red, line width=2pt] (6,-5.5) -- (2.5,-9);
        
        \draw[dashed, red, line width=2pt] (0,-5) -- (5,-5);
        
        \node[font=\boldmath] at (-.25,-.25) {$A$};
        \node[font=\boldmath] at (-.25,-6.75) {$B$};
        \node[font=\boldmath] at (5.25,-6.75) {$C$};
        \node[font=\boldmath] at (5.25,-.25) {$D$};
        \node[font=\boldmath] at (2.5,-9.25) {$P$};
        
        \node[font=\boldmath] at (-.25,-5.25) {$B'$};
        \node[font=\boldmath] at (4.75,-5.75) {$C'$};
        \node[font=\boldmath] at (2.5,-8) {$Q$};

        \node[font=\boldmath] at (5.25,-10.25) {$E$};
        \node[font=\boldmath] at (5.25,-5) {$J$};
        
        \node [circle, inner sep=2pt, fill=black, draw=black] at (0,-1) {};
        \node [circle, inner sep=2pt, fill=black, draw=black] at (0,-2) {};
        \node [circle, inner sep=2pt, fill=black, draw=black] at (.5,-2.5) {};
        \node [circle, inner sep=2pt, fill=black, draw=black] at (.5,-3.5) {};
        \node [circle, inner sep=2pt, fill=black, draw=black] at (.5,-5.5) {};
        \node [circle, inner sep=2pt, fill=black, draw=black] at (1,-2) {};
        \node [circle, inner sep=2pt, fill=black, draw=black] at (1,-6) {};
        \node [circle, inner sep=2pt, fill=black, draw=black] at (1.5,-1.5) {};
        \node [circle, inner sep=2pt, fill=black, draw=black] at (1.5,-2.5) {};
        \node [circle, inner sep=2pt, fill=black, draw=black] at (1.5,-6.5) {};
        \node [circle, inner sep=2pt, fill=black, draw=black] at (2,-1) {};
        \node [circle, inner sep=2pt, fill=black, draw=black] at (2,-3) {};
        \node [circle, inner sep=2pt, fill=black, draw=black] at (2,-4) {};
        \node [circle, inner sep=2pt, fill=black, draw=black] at (2,-6) {};
        \node [circle, inner sep=2pt, fill=black, draw=black] at (2.5,-.5) {};
        \node [circle, inner sep=2pt, fill=black, draw=black] at (2.5,-4.5) {};
        \node [circle, inner sep=2pt, fill=black, draw=black] at (2.5,-5.5) {};
        \node [circle, inner sep=2pt, fill=black, draw=black] at (2.5,-6.5) {};
        \node [circle, inner sep=2pt, fill=black, draw=black] at (3,-1) {};
        \node [circle, inner sep=2pt, fill=black, draw=black] at (3,-5) {};
        \node [circle, inner sep=2pt, fill=black, draw=black] at (3.5,-1.5) {};
        \node [circle, inner sep=2pt, fill=black, draw=black] at (3.5,-5.5) {};
        \node [circle, inner sep=2pt, fill=black, draw=black] at (4,-1) {};
        \node [circle, inner sep=2pt, fill=black, draw=black] at (4,-5) {};
        \node [circle, inner sep=2pt, fill=black, draw=black] at (4,-6) {};
        \node [circle, inner sep=2pt, fill=black, draw=black] at (4.5,-.5) {};
        \node [circle, inner sep=2pt, fill=black, draw=black] at (4.5,-4.5) {};
        \node [circle, inner sep=2pt, fill=black, draw=black] at (4.5,-6.5) {};
        \node [circle, inner sep=2pt, fill=black, draw=black] at (5,-1) {};

        \node [rectangle, inner sep=4pt, fill=darkgreen, draw=darkgreen] at (3,-8) {};

    \end{tikzpicture}
    }
    \subfigure{
        \begin{tikzpicture}[scale=.5,every node/.style={scale=.9}]
            \draw[->, line width=2pt] (0,0) -- (0,4);
            \node at (.5,2) {$+y$};
        \end{tikzpicture}
    }
    \caption{Robots on an infinite triangular grid: one on every round highlighted vertex.}
    \label{figure:infinite-triangular-grid}
\end{figure}

\subsection{Problem Statement}
The input is a global state $s$ that describes the location of $n$ robots placed on the grid points of an infinite triangular grid $G$ such that the robots form a connected graph. 
The \gmrit problem requires that all robots gather at one vertex of $G$ and stay forever at that vertex subject to the following constraints:

\begin{itemize}
    \item \textit{Visibility:}
    A robot can only determine if another robot is present in a neighbouring location. It cannot exchange data with another robot.
    \item \textit{One Axis Agreement:}
    All robots agree on one axis and the orientation of that axis. (cf. $y$-axis as shown in \Cref{figure:infinite-triangular-grid}).
\end{itemize}

All robots are independent and identical from a physical and computational perspective and do not have an ID.
They are oblivious to the coordinates of their location on the infinite triangular grid $G$.
Observe that we can allow a global state $s$ to be a multiset of vertices, each of which is the location of a robot.
For $s$, its \textit{visibility graph} is the subgraph of $G$ induced by the set of vertices in $s$.

Notice that by definition of the problem statement, a solution to \gmrit will provide silent self-stabilization.
An instance of \gmrit
is shown in \Cref{figure:infinite-triangular-grid}. 
Here, each round highlighted vertex represents a robot. 
Observe that the visibility graph of this global state is a connected graph.

Next, we discuss how \gmrit problem is not a lattice-linear problem. This can be illustrated by a system containing two robots $x_1$ and $x_2$ present at locations $l_1$ and $l_2$ ($l_1$ and $l_2$ are different vertices on the same edge) on $G$. In such a system $x_1$ can move to $l_2$, in which case, $x_2$ is not \imped, or otherwise, $x_2$ can move to $l_1$, in which case, $x_1$ is not \imped. Hence, no specific robot can be deemed \imped, though, the global state is suboptimal.

\subsection{Problem Specific Definitions}\label{subsection:gsgs-definitions}

Some of the definitions that we discuss in this subsection are from \cite{Goswami2022}. 
A \textit{horizontal layer} is a line perpendicular to the $y$-axis that passes through at least one robot.
The \textit{top layer} of a global state $s$ is a horizontal layer such that there is no horizontal layer above it (e.g., $AD$ in \Cref{figure:infinite-triangular-grid}).
\textit{Bottom layer} of $s$ is a horizontal layer such that there is no horizontal layer below it (e.g., $BC$ in \Cref{figure:infinite-triangular-grid}).

A \textit{vertical layer} is parallel to the $y$-axis such that it passes through at least one robot.
The \textit{left layer} of $s$ is the vertical layer such that there is no vertical layer on its left (e.g., $AB$ in \Cref{figure:infinite-triangular-grid}).
The \textit{right layer} of $s$ is the vertical layer such that there is no vertical layer on its right (e.g., $CD$ in \Cref{figure:infinite-triangular-grid}).

As seen in \Cref{figure:infinite-triangular-grid}, vertices in $G$ are intersections of three groups of parallel lines; one of these groups are lines parallel to the $y$-axis. We use $p$-axis (positive slope) and $n$ axis (negative slope) to denote the other group of parallel lines.
The \textit{positive slant} is a line
parallel to $p$-axis (e.g., $BP$ in \Cref{figure:infinite-triangular-grid}) and \textit{negative slant} is a line
parallel to $n$-axis (e.g., $CP$ in \Cref{figure:infinite-triangular-grid}).
The \textit{bottom $l2r$ slant} of $s$ is a negative slant that passes through a robot such that there is no negative slant on its left passing through a robot (e.g., $B'Q$ in \Cref{figure:infinite-triangular-grid}).
The \textit{bottom $r2l$ slant} of $s$ is a positive slant that passes through a robot such that there is no positive slant on its right passing through a robot (e.g., $C'Q$ in \Cref{figure:infinite-triangular-grid}).
Note that a negative slant and a positive slant can be imaginary, or a line in $G$.

The \textit{depth} of $s$ is the distance between its top layer and its bottom layer.
The \textit{width} of $s$ is defined as the distance between its left layer and right layer.

As shown in \Cref{figure:infinite-triangular-grid}, a polygon $ABPCD$ is a \textit{bounding polygon} of a global state $s$ if
(1) $AB$ and $CD$ are line segments of the left layer and the right layer of $s$ respectively, (2) $AD$ and $BC$ are line segments of the top layer and the bottom layer of $s$ respectively, and (3) $P$ is the point of intersection between the negative slant passing through $B$ and the positive slant passing through $C$.

Note that these definitions (top/bottom layer, etc.) are only used for discussion of the protocol and proofs. The robots are not aware of them. Similarly, the robots can distinguish between $up$ and $down$, but not $left$ and $right$.

\subsection{General Idea of the Algorithm}

A robot has six possible neighbouring locations. The naming convention for these locations is as shown in \Cref{figure:local-states} (a) \cite{Goswami2022}. Since each robot has 6 neighbouring locations, it can be in one of $2^6$ possible local states. Of these, the robot can move in only 11 states. Of these, 7 states are shown in \Cref{figure:local-states} (b) \cite{Goswami2022}. The other 4 states are mirror images of those shown in cases 2, 5, 6 and 7 in \Cref{figure:local-states} (b).

\begin{figure}[ht]
    \centering
    \subfigure[]{
    \begin{minipage}{.25\textwidth}
        \begin{tikzpicture}[x=1.5cm,y=.8660cm] 
        \node [circle, fill = black, inner sep = 2pt, label=left:$i$] (i) at (0,0) {};
        \node [circle, fill = black, inner sep = 1pt, label=below:$v_1$] (v1) at (0,-1) {};
        \node [circle, fill = black, inner sep = 1pt, label=left:$v_2^\ell$] (v2l) at (-.5,-.5) {};
        \node [circle, fill = black, inner sep = 1pt, label=right:$v_2^r$] (v2r) at (.5,-.5) {};
        \node [circle, fill = black, inner sep = 1pt, label=left:$v_3^\ell$] (v3l) at (-.5,.5) {};
        \node [circle, fill = black, inner sep = 1pt, label=right:$v_3^r$] (v3r) at (.5,.5) {};
        \node [circle, fill = black, inner sep = 1pt, label=above:$v_4$] (v4) at (0,1) {};

        \node [circle, draw = black, inner sep = 2pt] (v1) at (0,-1) {};
        \node [circle, draw = black, inner sep = 2pt] (v2l) at (-.5,-.5) {};
        \node [circle, draw = black, inner sep = 2pt] (v3l) at (-.5,.5) {};

        \draw (i) -- (v1); \draw (i) -- (v2l); \draw (i) -- (v2r); \draw (i) -- (v3l); \draw (i) -- (v3r);
        \draw (i) -- (v4);
        
        \draw (v1) -- (v2l) -- (v3l) -- (v4) -- (v3r) -- (v2r) -- (v1);
    \end{tikzpicture}
    \end{minipage}
    }
    \renewcommand{\thesubfigure}{(b)}
    \subfigure[]{
    \begin{minipage}{.35\textwidth}
        \subfigure{
        \begin{tikzpicture}[x=1cm,y=.5773cm] 
            \node [circle, fill = black, inner sep = 2pt] (i) at (0,0) {};
            \node [circle, draw = black, inner sep = 2pt] (v1) at (0,-1) {};
            \node [circle, fill = black, inner sep = .5pt] (v2l) at (-.5,-.5) {};
            \node [circle, fill = black, inner sep = .5pt] (v2r) at (.5,-.5) {};
            \node [circle, fill = black, inner sep = .5pt] (v3l) at (-.5,.5) {};
            \node [circle, fill = black, inner sep = .5pt] (v3r) at (.5,.5) {};
            \node [circle, fill = black, inner sep = .5pt] (v4) at (0,1) {};
    
            \draw (i) -- (v1); \draw (i) -- (v2l); \draw (i) -- (v2r); \draw (i) -- (v3l); \draw (i) -- (v3r);
            \draw (i) -- (v4);
            
            \draw (v1) -- (v2l) -- (v3l) -- (v4) -- (v3r) -- (v2r) -- (v1);
            
            \node at (0,-1.5) {$\ast$};
        \end{tikzpicture}
    }
    \subfigure{
        \begin{tikzpicture}[x=1cm,y=.5773cm] 
            \node [circle, fill = black, inner sep = 2pt] (i) at (0,0) {};
            \node [circle, draw = black, inner sep = 2pt] (v1) at (0,-1) {};
            \node [circle, draw = black, inner sep = 2pt] (v2l) at (-.5,-.5) {};
            \node [circle, fill = black, inner sep = .5pt] (v2r) at (.5,-.5) {};
            \node [circle, fill = black, inner sep = .5pt] (v3l) at (-.5,.5) {};
            \node [circle, fill = black, inner sep = .5pt] (v3r) at (.5,.5) {};
            \node [circle, fill = black, inner sep = .5pt] (v4) at (0,1) {};
    
            \draw (i) -- (v1); \draw (i) -- (v2l); \draw (i) -- (v2r); \draw (i) -- (v3l); \draw (i) -- (v3r);
            \draw (i) -- (v4);
            
            \draw (v1) -- (v2l) -- (v3l) -- (v4) -- (v3r) -- (v2r) -- (v1);
            
            \node at (0,-1.5) {$\ast$};
        \end{tikzpicture}
    }
    \subfigure{
        \begin{tikzpicture}[x=1cm,y=.5773cm] 
            \node [circle, fill = black, inner sep = 2pt] (i) at (0,0) {};
            \node [circle, fill = black, inner sep = .5pt] (v1) at (0,-1) {};
            \node [circle, draw = black, inner sep = 2pt] (v2l) at (-.5,-.5) {};
            \node [circle, draw = black, inner sep = 2pt] (v2r) at (.5,-.5) {};
            \node [circle, fill = black, inner sep = .5pt] (v3l) at (-.5,.5) {};
            \node [circle, fill = black, inner sep = .5pt] (v3r) at (.5,.5) {};
            \node [circle, fill = black, inner sep = .5pt] (v4) at (0,1) {};
    
            \draw (i) -- (v1); \draw (i) -- (v2l); \draw (i) -- (v2r); \draw (i) -- (v3l); \draw (i) -- (v3r);
            \draw (i) -- (v4);
            
            \draw (v1) -- (v2l) -- (v3l) -- (v4) -- (v3r) -- (v2r) -- (v1);
            
            \node at (0,-1.5) {$\ast$};
        \end{tikzpicture}
    }
    \subfigure{
        \begin{tikzpicture}[x=1cm,y=.5773cm] 
            \node [circle, fill = black, inner sep = 2pt] (i) at (0,0) {};
            \node [circle, draw = black, inner sep = 2pt] (v1) at (0,-1) {};
            \node [circle, draw = black, inner sep = 2pt] (v2l) at (-.5,-.5) {};
            \node [circle, draw = black, inner sep = 2pt] (v2r) at (.5,-.5) {};
            \node [circle, fill = black, inner sep = .5pt] (v3l) at (-.5,.5) {};
            \node [circle, fill = black, inner sep = .5pt] (v3r) at (.5,.5) {};
            \node [circle, fill = black, inner sep = .5pt] (v4) at (0,1) {};
    
            \draw (i) -- (v1); \draw (i) -- (v2l); \draw (i) -- (v2r); \draw (i) -- (v3l); \draw (i) -- (v3r);
            \draw (i) -- (v4);
            
            \draw (v1) -- (v2l) -- (v3l) -- (v4) -- (v3r) -- (v2r) -- (v1);
            
            \node at (0,-1.5) {$\ast$};
        \end{tikzpicture}
    }\\
    \subfigure{
        \begin{tikzpicture}
            \node[] at (0,0) {case 1};
            \node[] at (1.5,0) {case 2};
            \node[] at (3,0) {case 3};
            \node[] at (4.5,0) {case 4};
        \end{tikzpicture}
    }\\
    \subfigure{
        \begin{tikzpicture}[x=1cm,y=.5773cm] 
            \node [circle, fill = black, inner sep = 2pt] (i) at (0,0) {};
            \node [circle, fill = black, inner sep = .5pt] (v1) at (0,-1) {};
            \node [circle, draw = black, inner sep = 2pt] (v2l) at (-.5,-.5) {};
            \node [circle, fill = black, inner sep = .5pt] (v2r) at (.5,-.5) {};
            \node [circle, draw = black, inner sep = 2pt] (v3l) at (-.5,.5) {};
            \node [circle, fill = black, inner sep = .5pt] (v3r) at (.5,.5) {};
            \node [circle, fill = black, inner sep = .5pt] (v4) at (0,1) {};
    
            \draw (i) -- (v1); \draw (i) -- (v2l); \draw (i) -- (v2r); \draw (i) -- (v3l); \draw (i) -- (v3r);
            \draw (i) -- (v4);
            
            \draw (v1) -- (v2l) -- (v3l) -- (v4) -- (v3r) -- (v2r) -- (v1);
            
            \node at (-.75,-.75) {$\ast$};
        \end{tikzpicture}
    }
    \subfigure{
        \begin{tikzpicture}[x=1cm,y=.5773cm] 
            \node [circle, fill = black, inner sep = 2pt] (i) at (0,0) {};
            \node [circle, draw = black, inner sep = 2pt] (v1) at (0,-1) {};
            \node [circle, draw = black, inner sep = 2pt] (v2l) at (-.5,-.5) {};
            \node [circle, fill = black, inner sep = .5pt] (v2r) at (.5,-.5) {};
            \node [circle, draw = black, inner sep = 2pt] (v3l) at (-.5,.5) {};
            \node [circle, fill = black, inner sep = .5pt] (v3r) at (.5,.5) {};
            \node [circle, fill = black, inner sep = .5pt] (v4) at (0,1) {};
    
            \draw (i) -- (v1); \draw (i) -- (v2l); \draw (i) -- (v2r); \draw (i) -- (v3l); \draw (i) -- (v3r);
            \draw (i) -- (v4);
            
            \draw (v1) -- (v2l) -- (v3l) -- (v4) -- (v3r) -- (v2r) -- (v1);
            
            \node at (-.75,-.75) {$\ast$};
        \end{tikzpicture}
    }
    \subfigure{
        \begin{tikzpicture}[x=1cm,y=.5773cm] 
            \node [circle, fill = black, inner sep = 2pt] (i) at (0,0) {};
            \node [circle, fill = black, inner sep = .5pt] (v1) at (0,-1) {};
            \node [circle, draw = black, inner sep = 2pt] (v2l) at (-.5,-.5) {};
            \node [circle, fill = black, inner sep = .5pt] (v2r) at (.5,-.5) {};
            \node [circle, fill = black, inner sep = .5pt] (v3l) at (-.5,.5) {};
            \node [circle, fill = black, inner sep = .5pt] (v3r) at (.5,.5) {};
            \node [circle, fill = black, inner sep = .5pt] (v4) at (0,1) {};
    
            \draw (i) -- (v1); \draw (i) -- (v2l); \draw (i) -- (v2r); \draw (i) -- (v3l); \draw (i) -- (v3r);
            \draw (i) -- (v4);
            
            \draw (v1) -- (v2l) -- (v3l) -- (v4) -- (v3r) -- (v2r) -- (v1);
            
            \node at (-.75,-.75) {$\ast$};
        \end{tikzpicture}
    }\\
    \subfigure{
        \begin{tikzpicture}
            \node[] at (5,0) {};
            \node[] at (6,0) {case 5};
            \node[] at (7.75,0) {case 6};
            \node[] at (9.5,0) {case 7};
            \node[] at (11,0) {};
        \end{tikzpicture}
    }
    \end{minipage}
    }
    \begin{minipage}{.1\textwidth}
        \subfigure{
        \begin{tikzpicture}
            \draw[->, line width=2pt] (0,0) -- (0,2);
            \node at (.5,1) {$+y$};
        \end{tikzpicture}
    }
    \end{minipage}
    \caption{(a) Naming conventions for neighbourhood of a robot. (b) Cases where a node is \imped. Note that the mirror images of these local states are also \imped.}
    \label{figure:local-states}
\end{figure}
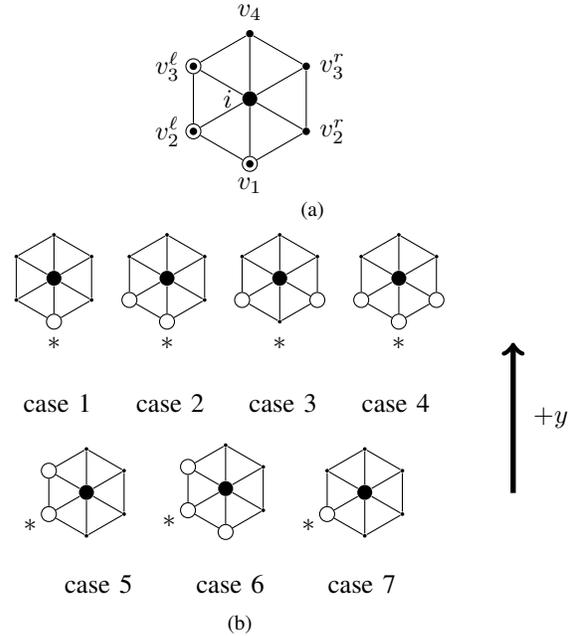

Authors of \cite{Goswami2022} show that the robots do not move out of the bounding polygon. They also show that the visibility graph induced among the robots stay connected, and the dimensions of the bounding polygon reduce with every round. 

\section{GSGS Algorithm \cite{Goswami2022} for \gmrit problem}\label{section:algorithm-gsgs}

In this section, we reword the algorithm in \cite{Goswami2022} to demonstrate its lattice-linearity. 
We define the following macros. 
For a set $L$ of locations around a node $i$, $\textsc{At}(i,L)$ is true iff if there is at least one robot at each location in $L$. $\textsc{Only-At}(i,L)$ is true iff $\textsc{At}(i,L)$, and there is no other robot at locations other than locations in $L$.
A robot $i$ is \textit{extreme} if (1) there is no robot on \textit{top} ($v_4(i)$) of $i$ and (2) if there is a robot on the left ($v_2^\ell$ or $v_3^\ell$) of $i$, then there is no robot on the right ($v_2^r$ or $v_3^r$) of $i$.
If a robot $i$ is extreme, and there is no robot around it, then $i$ is a \textit{terminating} robot.
If $i$ is extreme, and there is a robot only on $v_3(i)$ or there are robots only on both $v_1(i)$ and $v_3(i)$, then $i$ is a \textit{staying} robot.

If $i$ is extreme, and there is a robot on $v_1(i)$ and no robot on $v_3(i)$, then $i$ is a \textit{downward \imped} robot.
If $i$ is not a downward \imped robot, not a staying robot, and not a terminating robot, then it is a \textit{downslant \imped} robot.
If $i$ is not an extreme robot, and there is a robot on both $v_2(i)$ and no robot at its y-coordinate $>$ 0, then $i$ is a \textit{non-extreme \imped} robot.
\begin{center}
    $\begin{array}{|l|}
        \hline 
        \textsc{Extreme}(i)\equiv \lnot\textsc{At}(i,\{v_4\})\land((\textsc{At}(i,\{v_2^r\})\lor\\
        \quad\quad \textsc{At}(i,\{v_3^r\}))\Rightarrow(\lnot \textsc{At}(i,\{v_2^\ell\})\land \lnot\textsc{At}(i,\{v_3^\ell\}))) \\
        \textsc{Terminating}(r)\equiv \textsc{Extreme}(i)\land\\
        \quad\quad (\forall q\in \{v_1,v_2^r,v_3^r,v_4,v_3^\ell,v_2^\ell\}:\lnot\textsc{At}(i,\{q\})).\\
        \textsc{Staying}(i)\equiv \textsc{Extreme}(i)\land (\textsc{Only-At}(i,\{v_3^r\}) \lor\\
        \quad\quad \textsc{Only-At}(i,\{v_3^\ell\}) \lor \textsc{Only-At}(i,\{v_1,v_3^r\}) \lor\\
        \quad\quad \textsc{Only-At}(i,\{v_1,v_3^\ell\})).\\
        \textsc{Downward}(i)\equiv \textsc{Extreme}(i)\land \textsc{At}(i,\{v_1\})\land\\
        \quad\quad \lnot(\textsc{At}(i,\{v_3^r\})\lor \textsc{At}(i,\{v_3^\ell\})).\\
        \textsc{Downslant-Right}(i)\equiv \textsc{Extreme}(i)\land\\
        \quad\quad \lnot \textsc{Downward}(i)\land \lnot\textsc{Staying}(i) \land\\
        \quad\quad \lnot\textsc{Terminating}(r) \land \textsc{At}(i,\{v_2^r\}).\\
        \textsc{Downslant-Left}(i)\equiv \textsc{Extreme}(i)\land\\
        \quad\quad \lnot \textsc{Downward}(i)\land \lnot\textsc{Staying}(i) \land\\
        \quad\quad \lnot\textsc{Terminating}(r) \land \textsc{At}(i,\{v_2^\ell\}).\\
        \textsc{Non-Extreme}(i)\equiv \lnot\textsc{Extreme}(i) \land \textsc{At}(i,\{v_2^r,v_2^\ell\}) \land\\
        \quad\quad \lnot(\textsc{At}(i,\{v_3^r\}) \lor \textsc{At}(i,\{v_3^\ell\}) \lor \textsc{At}(i,\{v_4\})).\\
        \textsc{\Imped-GSGS}(i)\equiv \textsc{Downward}(i)\lor\\
        \quad\quad \textsc{Downslant-Right}(i) \lor\\
        \quad\quad \textsc{Downslant-Left}(i) \lor \textsc{Non-Extreme}(i).\\
        \hline 
    \end{array}$
\end{center}

The algorithm is described as follows.
If a robot $i$ is \textit{downward \imped}, then $i$ moves downwards to $v_1(i)$.
If $i$ is \textit{downslant \imped}, then $i$ moves to $v_2(i)$.
If $i$ is a \textit{non-extreme \imped} robot, then $i$ moves to $v_1(i)$.

\begin{algorithm}\label{algorithm:gsgs-algo}
    Rules for robot $i$.
\end{algorithm}
\begin{center}
    $\begin{array}{|l|}
        \hline 
        \textsc{\Imped-GSGS}(i)\longrightarrow\\
        \begin{cases}
            move(i,v_1(i)) & \text{if $\textsc{Downward}(i)$}\\
            move(i,v_2^r(i)) & \text{if $\textsc{Downslant-Right}(i)$}\\
            move(i,v_2^\ell(i)) & \text{if $\textsc{Downslant-Left}(i)$}\\
            move(i,v_1(i)) & \text{if $\textsc{Non-Extreme}(i)$}
        \end{cases}~\\
        \hline 
    \end{array}$
\end{center}

In \cite{Goswami2022}, authors assume a distributed scheduler. Next, we show that \Cref{algorithm:gsgs-algo} is lattice-linear, Thus, it will be correct even in asynchrony. 

\subsection{Lattice-Linearity}\label{subsection:gsgs-lattice-linearity}

In this subsection, we show lattice-linearity of \Cref{algorithm:gsgs-algo}. Among the lemmas and theorems presented here, \Cref{lemma:lr-layers} and \Cref{lemma:horizontal-layer-moves-down} are adopted from \cite{Goswami2022}. We use them to help prove some properties of \Cref{algorithm:gsgs-algo}. 
All other results show or arise from the lattice-linearity of \Cref{algorithm:gsgs-algo}. 

\begin{lemma}\label{lemma:gmrit-llp}
    The predicate $\forall i:\lnot\textsc{\Imped-GSGS}(i)$ is a lattice-linear predicate on $n$ robots, and the visibility graph does not get disconnected by the actions under \Cref{algorithm:gsgs-algo}.
\end{lemma}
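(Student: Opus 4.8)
To prove the lemma I would split it into its two clauses: (i) lattice-linearity of $\mathcal{P} \equiv \forall i: \lnot\textsc{\Imped-GSGS}(i)$, and (ii) preservation of connectivity of the visibility graph. The two need different techniques, and clause (i) should come first since it is what fixes the $\prec$-lattice referred to in clause (ii)'s phrase ``the actions under \Cref{algorithm:gsgs-algo}''.

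\textbf{Step 1: setting up the $\prec$-lattice (the local order $\prec_l$).} Since the robots are oblivious to coordinates but the analysis is done in a fixed coordinate system, I would first pin down a total order $\prec_l$ on the local states a robot visits. The natural choice, given that every move under \Cref{algorithm:gsgs-algo} decreases the $y$-coordinate ($v_1$ lowers $y$ by one unit; $v_2^r$ and $v_2^\ell$ also go to a strictly lower layer) or — in the $v_2$ cases — moves ``down-and-sideways'', is to order local states primarily by decreasing $y$-coordinate, breaking ties in the horizontal direction consistently with the chosen $+y$ orientation and the fact that robots cannot distinguish left from right (so the tie-break must be symmetric, e.g.\ by the signed horizontal offset from the eventual gathering column, or simply lexicographically in the global coordinate system used for the proof). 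I would verify that each of the four move types in the algorithm strictly advances a robot in this order, so that the induced $\prec_g$ on global states (\Cref{definition:<-lattice}) is a genuine partial order with finite-length chains.

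\textbf{Step 2: the impedensable condition.} This is the heart of clause (i) and the step I expect to be the main obstacle. I must show that whenever $\lnot\mathcal{P}(s)$ — i.e.\ some robot $i$ has $\textsc{\Imped-GSGS}(i)$ true — that robot is genuinely impedensable in the sense of \Cref{definition:impedensable-node}: for every $s' \succ s$ with $s'[i] = s[i]$, we still have $\lnot\mathcal{P}(s')$. The content here is that the four predicates $\textsc{Downward}$, $\textsc{Downslant-Right}$, $\textsc{Downslant-Left}$, $\textsc{Non-Extreme}$ depend on the state of $i$ and on whether certain neighbouring vertices are occupied, and — crucially — moving other robots only \emph{upward} in the order (i.e.\ to strictly lower $y$, per Step 1) cannot falsify robot $i$'s guard. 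I would go case by case through the seven cases of \Cref{figure:local-states}(b) (plus mirrors): in each, identify which neighbours being occupied/empty triggers the guard, and argue that in any $s' \succ s$ fixing $s[i]$, those neighbour cells of $i$ cannot have \emph{gained} a blocking robot nor \emph{lost} a required one in a way that clears the guard — because other robots can only have descended, and a robot descending past $i$'s neighbourhood either leaves the relevant cell still occupied by $i$-blocking geometry or is itself now impedensable, so $\lnot\mathcal{P}(s')$ holds either way. The subtle part is that the guards reference a ``global'' condition (``no robot at its $y$-coordinate $>0$'' in $\textsc{Non-Extreme}$, the extreme/top-layer conditions): I need to confirm these are monotone under $\prec_g$, which they are precisely because $\prec_g$ only ever pushes robots down, never up.

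\textbf{Step 3: connectivity.} For clause (ii), I would lean on \Cref{lemma:lr-layers} and \Cref{lemma:horizontal-layer-moves-down} (adopted from \cite{Goswami2022}) together with a local argument: when an impedensable robot $i$ moves to $v_1(i)$, $v_2^r(i)$, or $v_2^\ell(i)$, it moves toward a robot that its guard guarantees is present (e.g.\ $\textsc{Downward}$ requires a robot at $v_1(i)$; $\textsc{Downslant-Right}$ requires one at $v_2^r(i)$; $\textsc{Non-Extreme}$ has robots at both $v_2^r$ and $v_2^\ell$), so $i$ lands on an occupied or adjacent-to-occupied cell and its former neighbours remain connected through that destination. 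I would show that no edge of the visibility graph is ``broken'' without a compensating path: any robot $j$ that was adjacent to $i$ in $s$ is, after $i$'s move, either still adjacent to $i$ or adjacent to the robot $i$ moved onto/next to. Since connectivity is a global property, I would phrase this as: the move is a contraction/short-cut in the visibility graph, hence preserves connectivity; asynchronous concurrent moves are handled because each is individually safe and the $\prec$-lattice (Step 1) guarantees the composite global state is consistent with some sequential schedule. The main risk in this step is a move that temporarily separates $i$ from a neighbour on the opposite side; I would handle it by checking that the guard definitions (particularly the $\textsc{Extreme}$ clause forbidding robots on both left and right simultaneously when moving down-slant, and the terminating/staying exclusions) rule out exactly those dangerous configurations, which is presumably why \cite{Goswami2022} defined the cases the way they did.
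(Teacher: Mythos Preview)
Your overall plan—case-by-case through the seven configurations of \Cref{figure:local-states}(b)—matches the paper's, and you correctly identify that both lattice-linearity and connectivity must be argued. But your Step~2 relies on the wrong mechanism, and this is a genuine gap.

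You propose to argue that ``moving other robots only upward in the order (i.e.\ to strictly lower $y$) cannot falsify robot $i$'s guard,'' treating the guard as monotone under $\prec_g$. It is not. Take Case~1: $i$ is downward \imped with a single neighbour $j$ at $v_1(i)$. If $j$ were to descend, $v_1(i)$ would empty, $i$ would become \textsc{Terminating}, and $\textsc{\Imped-GSGS}(i)$ would be false; your fallback (``or is itself now \imped'') need not hold either, since $j$ could land in a terminating or staying configuration. The guard is \emph{not} preserved by arbitrary descents of other robots. What the paper actually argues, and what you need, is a \emph{pinning} argument: in every one of the seven cases there is a specific neighbour $j$ of $i$ that is \emph{not} \imped precisely because $i$ sits at $v_4(j)$ or $v_3(j)$, and therefore $j$ cannot move so long as $i$ stays put. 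That pinned robot keeps $i$'s guard true (possibly shifting $i$ to a different one of the seven cases, as the paper notes for Cases~2, 4, 5, 6). Monotonicity of $y$ is used only to conclude that $s'[i]=s[i]$ in a reachable $s'\succ s$ means $i$ never left its cell in between, so the pinning applies throughout.

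Two smaller points. First, the conditions you flag as ``global'' in $\textsc{Non-Extreme}$ are in fact purely local (they refer to $v_3^\ell,v_3^r,v_4$), so that concern disappears. Second, \Cref{lemma:lr-layers} and \Cref{lemma:horizontal-layer-moves-down} are about the bounding polygon and the top layer's descent; they do not help with connectivity. The paper instead folds connectivity into the same case analysis: in each of the seven cases it observes that the pinned neighbour $j$ is exactly the cell $i$ moves onto or next to, so $i$ remains adjacent to $j$ after moving, and in the cases where another neighbour \emph{can} move concurrently (Cases~2, 4, 5, 6) it checks explicitly that $i$'s old action is still safe. Your separate Step~3 would work, but you should derive connectivity from the pinned-neighbour observation rather than from the bounding lemmas.
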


\begin{proof}
    In this proof, we consider the 7 cases as shown in \Cref{figure:local-states} (b) and show that if robot $i$ is \imped, it must execute to reach the goal state.
    We show that if a robot $i$ is \imped, then there exists at least one robot $j$ around $i$ which does not move until $i$ moves. Specifically, \Cref{algorithm:gsgs-algo} imposes that $j$ `waits' for $i$ to move. It means that if $i$ does not move, then the robots cannot find the gathering point.
    
    \textit{Case 1}: This robot $i$ is a downward \imped robot. The other robot that is present below it is not extreme and is also not a non-extreme \imped robot because $i$ is present above it, so it will not move until $i$ changes its location.

    \textit{Case 2}: This robot $i$ is a downward \imped robot. There are two robots, $x_1$ and $x_2$, present at locations $v_1(i)$ and $v_2(i)$ respectively. $x_1$ is not extreme and is also not non-extreme \imped because $i$ is present above it. So $x_1$ will not move until $i$ changes its location. $x_2$ may be \imped.
    $x_2$ can only move to the location of $x_1$ thereby resulting in case 1.
    In this possibility, the robot $i$ remains \imped and its required action does not change.

    \textit{Case 3}: This robot $i$ is a non-extreme \imped robot. There are two robots, $x_1$ and $x_2$, present at locations $v_2(i)$-left and $v_2(i)$-right respectively. For $x_1$ or $x_2$ to be impedensable, there must be some robot at the $v_1(i)$ location, which is not the case, thus, they are not \imped. Hence, $x_1$ and $x_2$ will not move until $i$ changes its location.

    \textit{Case 4}: This robot $i$ is a non-extreme \imped robot. There are three robots, $x_1$, $x_2$ and $x_3$, present at locations $v_2(i)$-left, $v_1(i)$ and $v_2(i)$-right respectively. $x_2$ is not \imped. $x_1$ and $x_3$ may be \imped, based on their local states. If one or both of them move, they will move to the location of $x_2$, resulting in case 2 or case 1.
    In these possibilities, the robot $i$ remains \imped and its required action does not change.

    \textit{Case 5}: This robot $i$ is a downslant \imped robot. There are two robots, $x_1$ and $x_2$, present at locations $v_2(i)$ and $v_3(i)$ respectively. $x_1$ is not extreme and is also not non-extreme \imped. So $x_1$ will not move until $i$ changes its location. $x_2$ may be \imped, based on its local state.
    $x_2$ can move to the location of $x_1$ or $i$ thereby resulting in case 7.
    In this possibility, the robot $i$ remains \imped and its required action does not change.
    
    \textit{Case 6}: This robot $i$ is a downslant \imped robot. There are three robots, $x_1$, $x_2$ and $x_3$, present at locations $v_1(i)$, $v_2(i)$ and $v_3(i)$ respectively. $x_1$ and $x_2$ are not extreme and are also not non-extreme \imped. Initially, $x_1$ and $x_2$ cannot move. $x_3$ may be downward \imped, based on its local state.
    $x_3$ can only move to the location of $x_2$ thereby resulting in case 2. After this, one or both of $x_1$ and $x_2$ can move to the location of $x_1$, resulting in case 1.
    In these possibilities, $i$ remains \imped, its required action may change, but the graph does not get disconnected if it executes under case 6 (using old information, if it assumes, despite the movement of other robots, that it falls in case 6).

    \textit{Case 7}: This robot $i$ is a downslant \imped robot. The other robot $x_1$ that is present at $v_2(i)$ is not extreme and is also not a non-extreme \imped robot, so $x_1$ will not move until $i$ changes its location.

    From these cases, we also have that an \imped robot stays connected to the robots that were its neighbours before it moved. This implies that the visibility graph stays connected after any \imped robots move.
\end{proof}

The robots executing \Cref{algorithm:gsgs-algo}, as shown in \cite{Goswami2022}, stay in the bounding polygon $ABPCD$.
Next, we show, using the above proof, a tighter polygon bounding the robots. To define this polygon, we let $Q$ to be the point such that it is an intersection between the bottom $l2r$ slant of $s$ and the bottom $r2l$ slant of $s$ (cf. \Cref{figure:infinite-triangular-grid}). Let $B'$ be the point of intersection between left layer ($AB$) and the bottom $l2r$ slant of $s$ and let $C'$ be the point of intersection between the right layer ($CD$) and the bottom $r2l$ slant of $s$.
We show that the robots never step out of the polygon $AB'QC'D$, which is tighter than $ABPCD$. 

\begin{observation}\label{observation:imped-5-6}
    If the neighbouring robot, say $j$ of an \imped robot $i$ moves then $i$ or $j$ fall under case 5 or case 6.
\end{observation}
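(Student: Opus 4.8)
The plan is to extend the case analysis already carried out in the proof of \Cref{lemma:gmrit-llp}. Recall that if robot $i$ is \imped then its neighbourhood is exactly one of the seven configurations of \Cref{figure:local-states}(b) or a mirror image, and the robots neighbouring $i$ are precisely the occupied positions drawn there. So it suffices to walk through these seven cases, decide for each which neighbours of $i$ can themselves be \imped (and hence move), and identify the configuration of such a neighbour.

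Cases 5 and 6 are immediate: there $i$ is itself in case 5 or case 6. For cases 1, 3 and 7 (and the mirror of 7) the plan is to show that no neighbour of $i$ can be \imped at all, so the statement is vacuous. The lever is that $i$ occupies one of the neighbouring positions of any such neighbour $j$: in case 1, $i$ sits at $v_4(j)$ of the robot $j$ at $v_1(i)$, which forbids both $\textsc{Extreme}(j)$ and (since $v_4(j)$ is then occupied) $\textsc{Non-Extreme}(j)$; in cases 3 and 7, $i$ sits at a $v_3$-position of the robot $j$ at a $v_2(i)$-position, which rules out $\textsc{Downward}(j)$ and $\textsc{Non-Extreme}(j)$, while the emptiness of $v_1(i)$ together with the constraint imposed by $\textsc{Extreme}(j)$ (needed for any downslant move) rules out $\textsc{Downslant-Right}(j)$ and $\textsc{Downslant-Left}(j)$.

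The only substantive cases are 2 and 4 (and their mirrors), where $i$ is downward \imped, respectively non-extreme \imped, and the only neighbour that can move is a robot $j$ at a $v_2(i)$-position, say $j$ at $v_2^\ell(i)$. Three things then have to be checked: (i) $i$ occupies $v_3^r(j)$, so $j$ is neither $\textsc{Downward}$ nor $\textsc{Non-Extreme}$ \imped; (ii) the robot at $v_1(i)$ occupies $v_2^r(j)$, so $j$ is $\textsc{Downslant-Right}$ \imped and moves to $v_1(i)$, which matches the move already attributed to $j$ in the proof of \Cref{lemma:gmrit-llp}; and (iii) since $j$ then has robots at both $v_2^r(j)$ (at $v_1(i)$) and $v_3^r(j)$ (namely $i$), the configuration of $j$ is the mirror of case 5 or of case 6 — never of case 7, precisely because $v_3^r(j)$ is occupied. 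The mirror cases of 2 and 4 are symmetric. The one place that needs care is step (iii): one must track that $i$'s own presence forces a robot at a $v_3$-position of $j$, which is exactly what excludes case 7 and pins $j$ down to case 5 or 6.
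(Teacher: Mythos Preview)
Your proposal is correct and matches the paper's approach: the paper states the observation without a separate proof, treating it as an immediate by-product of the case analysis in the proof of \Cref{lemma:gmrit-llp}, and your write-up simply makes that implicit analysis explicit. Your key extra step for cases~2 and~4---tracking that $i$ sits at $v_3^r(j)$ (or $v_3^\ell(j)$), which together with the presence of the $v_1(i)$ robot at $v_2^r(j)$ forces any \imped $j$ into the mirror of case~5 or case~6 and excludes case~7---is exactly the observation the paper leaves unstated.
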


\begin{lemma}\label{lemma:lower-slants}
    Throughout the execution of \Cref{algorithm:gsgs-algo}, the bottom $r2l$ slant and the bottom $l2r$ slant will not change.
\end{lemma}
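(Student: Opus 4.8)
The plan is to show that no robot's move can push a robot onto a negative slant to the left of the bottom $l2r$ slant, nor onto a positive slant to the right of the bottom $r2l$ slant; and moreover that the robots currently on these two extreme slants never leave them (so the slants also do not "retract" toward the interior). Combining these two facts gives that the bottom $l2r$ slant $B'Q$ and the bottom $r2l$ slant $C'Q$ are invariant throughout the execution. By symmetry (the $l2r$ and $r2l$ cases are mirror images), it suffices to argue one of them carefully, say the bottom $l2r$ slant, and then invoke the mirror argument.

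First I would recall the three possible moves: a \textsc{Downward} move ($i \to v_1(i)$), a \textsc{Downslant} move ($i \to v_2^r(i)$ or $i \to v_2^\ell(i)$), and a \textsc{Non-Extreme} move ($i \to v_1(i)$). I would set up coordinates aligned with the $n$-axis so that "the negative slant through a point" is a single coordinate, and order these slants left-to-right; the bottom $l2r$ slant is the leftmost such slant passing through a robot. Note that a \textsc{Downward} move and a \textsc{Non-Extreme} move are vertical (parallel to the $y$-axis), hence they keep the moving robot on the \emph{same} negative slant it was already on — so these moves can never create a robot on a new, strictly-more-leftward negative slant, and they cannot vacate the bottom $l2r$ slant toward the right either, since a vertical move stays on that slant. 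The only move that changes a robot's negative-slant coordinate is the \textsc{Downslant-Left} move $i \to v_2^\ell(i)$: this moves $i$ one step down-and-left, i.e. onto the negative slant immediately to the left of $i$'s current one. So the entire danger is concentrated in \textsc{Downslant-Left} moves, and the key claim becomes: \emph{a robot performing a \textsc{Downslant-Left} move is never on the bottom $l2r$ slant}, equivalently the location $v_2^\ell(i)$ it moves into is never strictly to the left of the current bottom $l2r$ slant.

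To prove that claim I would use \textsc{Downslant-Left}$(i)$, which in particular requires \textsc{Extreme}$(i)$ and $\textsc{At}(i,\{v_2^\ell\})$: there is already a robot at $v_2^\ell(i)$. Hence the negative slant one step to the left of $i$ already contains a robot, so it is at-or-right-of the bottom $l2r$ slant \emph{before} the move; after the move $i$ joins that slant, still at-or-right-of the bottom $l2r$ slant. So a \textsc{Downslant-Left} move never introduces a robot to the left of the current bottom $l2r$ slant. For the "no retraction" direction I would argue that the robot(s) realizing the bottom $l2r$ slant cannot be \imped in a way that moves them rightward off that slant: if a robot $j$ lies on the bottom $l2r$ slant, then there is no robot below-and-left of $j$ (that is what "bottom" and "leftmost" give us on that slant), so $j$ cannot be \textsc{Downslant-Left}; and a \textsc{Downward} or \textsc{Non-Extreme} move keeps $j$ on the same slant, while a \textsc{Downslant-Right} move would move $j$ to the right — but here I would invoke \Cref{lemma:gmrit-llp}/\Cref{observation:imped-5-6} together with the geometry of the bottom slant (a robot on the extreme $l2r$ slant has its $v_2^\ell, v_3^\ell$ positions empty, forcing it into \textsc{Staying} or \textsc{Terminating} rather than \textsc{Downslant-Right}) to rule this out, so that slant is never vacated. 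The same argument mirrored handles the bottom $r2l$ slant. Finally, since neither extreme slant gains a robot outside it nor loses its defining robots, $B'$, $C'$, and their intersection $Q$ are fixed, so both slants are unchanged throughout, using also \Cref{lemma:lr-layers} to know the left and right layers (hence $B'$ and $C'$ as their intersections with the fixed slants) behave well.

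The main obstacle I anticipate is the "no retraction" half — precisely pinning down that a robot sitting on the bottom $l2r$ slant can only ever be \textsc{Staying}, \textsc{Terminating}, or \textsc{Downward}/\textsc{Non-Extreme} (all of which keep that slant occupied), never \textsc{Downslant-Right}; this requires a careful case analysis of the possible local neighbourhoods of a robot that is simultaneously on the bottom layer's left-extremal negative slant, ensuring that the configurations forcing a rightward slant move simply cannot occur there. The "no new leftward robot" half, by contrast, is a one-line consequence of the guard $\textsc{At}(i,\{v_2^\ell\})$ built into \textsc{Downslant-Left}, and the verticality of the other two move types is immediate.
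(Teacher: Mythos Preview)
Your overall strategy---show that no move can carry a robot past the extremal slant, and that the slant is never vacated---matches the paper's. But the geometric computation at the heart of your argument is wrong, and this breaks both halves of the proof.

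You assert that a \textsc{Downward} or \textsc{Non-Extreme} move (to $v_1(i)$) is ``vertical, hence keeps the moving robot on the same negative slant.'' That is false in the triangular grid: the three grid directions through a vertex are the $y$-axis and the two diagonals, and moving one edge straight down along the $y$-axis shifts the $l2r$-slant coordinate by exactly one. Concretely, $v_1(i)$ and $v_2^\ell(i)$ both lie on the negative slant immediately to the left of $i$'s current one, while $v_2^r(i)$ lies on the \emph{same} negative slant as $i$. So your claim that ``the only move that changes a robot's negative-slant coordinate is \textsc{Downslant-Left}'' is incorrect, and your part~(a) collapses; likewise your ``no retraction'' case analysis is built on the wrong picture (and in fact \textsc{Downslant-Right} is precisely the move that \emph{keeps} a robot on its $l2r$ slant, not one you need to rule out).

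The paper's (much shorter) argument is this: if $i$ sits on the bottom $l2r$ slant, then $v_1(i)$ and $v_2^\ell(i)$ lie on a strictly-more-left negative slant and are therefore unoccupied. Hence $i$ cannot be \textsc{Downward}, \textsc{Non-Extreme}, or \textsc{Downslant-Left}; if $i$ is \imped at all it is in (the mirror of) case~5 or case~7, i.e.\ \textsc{Downslant-Right}, and moving to $v_2^r(i)$ keeps $i$ on the same slant. So robots on the bottom $l2r$ slant never leave it, which simultaneously handles ``no crossing'' and ``no retraction''; and since every move changes the slant coordinate by at most one, no robot off the slant can jump past it either. The $r2l$ case is the mirror image.
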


\begin{proof}
    In a global state $s$, a 
    robot present at the bottom $l2r$ slant of $s$ or a bottom $r2l$ slant of $s$ is represented in cases 5 and 7. From \Cref{algorithm:gsgs-algo} and the proof of \Cref{lemma:gmrit-llp}, if a robot is present at bottom $l2r$ slant (respectively, bottom $r2l$ slant), it will never move below ($v_1(i)$) or left ($v_2^\ell(i)$) of its location (respectively, below ($v_1(i)$) or right ($v_2^r(i)$) of its location).
\end{proof}

\begin{lemma}\label{lemma:lr-layers}\cite{Goswami2022}
    Throughout the execution of \Cref{algorithm:gsgs-algo}, left layer does not move leftwards and right layer does not move rightwards.
\end{lemma}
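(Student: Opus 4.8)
The plan is to read the claim off directly from the move rules of \Cref{algorithm:gsgs-algo}; since this lemma is taken from \cite{Goswami2022}, the argument just re-expresses their reasoning in terms of the macros introduced above. The first step is to note that every action of \Cref{algorithm:gsgs-algo} sends a robot $i$ to one of three destinations: $v_1(i)$ (the same vertical layer), $v_2^r(i)$ (one vertical layer to the right), or $v_2^\ell(i)$ (one vertical layer to the left); no action uses $v_3^\ell(i)$, $v_3^r(i)$, or $v_4(i)$. Hence a single move changes a robot's vertical layer by at most one, so the only way a robot can come to occupy a vertical layer strictly to the left of the current left layer is for a robot already on the left layer to move to its $v_2^\ell$. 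It therefore suffices to rule out that move for robots on the left layer.

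For the second step, fix a robot $i$ on the left layer of the current state. By definition of the left layer no robot occupies a vertex with smaller $x$-coordinate, so $v_2^\ell(i)$ and $v_3^\ell(i)$ are both empty; thus $\textsc{At}(i,\{v_2^\ell\})$ is false and hence $\textsc{Downslant-Left}(i)$ is false. Among the four actions of \Cref{algorithm:gsgs-algo}, the actions guarded by $\textsc{Downward}(i)$ and $\textsc{Non-Extreme}(i)$ move $i$ to $v_1(i)$, and the action guarded by $\textsc{Downslant-Right}(i)$ moves $i$ to $v_2^r(i)$; only the $\textsc{Downslant-Left}(i)$ action moves $i$ to $v_2^\ell(i)$. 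Since that guard is false for $i$, $i$ never moves to a smaller vertical layer. Combined with the first step, the minimum occupied vertical layer is non-decreasing throughout the execution, i.e., the left layer never moves leftwards.

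The right-layer statement is the mirror image: a robot $i$ on the right layer has $v_2^r(i)$ and $v_3^r(i)$ empty, so $\textsc{At}(i,\{v_2^r\})$ is false, hence $\textsc{Downslant-Right}(i)$ is false; that action is the unique rule of \Cref{algorithm:gsgs-algo} moving a robot one layer to the right, so by the same single-step reduction the maximum occupied vertical layer is non-increasing and the right layer never moves rightwards.

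The proof is short, and the only point I would take care to state explicitly is the reduction carried out in the first step — that an interior robot cannot overshoot past the extreme layer in one edge-step. This is immediate from the fact that each move shifts a robot by at most one vertical layer, but it should be spelled out, since without it one would have to examine the behaviour of every robot rather than only those currently on the boundary.
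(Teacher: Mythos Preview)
Your argument is correct. Note, however, that the paper does not supply its own proof of this lemma: it is stated with a citation to \cite{Goswami2022} and used as a black box, so there is no in-paper proof to compare against. Your reconstruction --- classifying the three possible destinations $v_1$, $v_2^r$, $v_2^\ell$ by their effect on the vertical layer, reducing to boundary robots via the single-edge-step observation, and then killing the $\textsc{Downslant-Left}$ guard on the left layer because $\textsc{At}(i,\{v_2^\ell\})$ is vacuously false there --- is exactly the natural argument and matches the reasoning in the original source.

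One small remark worth adding if you want the proof to stand on its own in the asynchronous model of this paper (rather than just the distributed-scheduler model of \cite{Goswami2022}): phrase the claim inductively as ``no robot ever occupies a vertical layer strictly left of the initial left layer.'' Then a left-layer robot's snapshot, taken at any point in the execution, also shows $v_2^\ell$ and $v_3^\ell$ empty, so the guard analysis goes through even with stale reads. Your current phrasing (``the current left layer'') implicitly assumes the robot reads the present configuration, which is fine under a distributed scheduler but slightly loose under the asynchrony this paper ultimately targets.
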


\begin{lemma}\cite{Goswami2022}\label{lemma:horizontal-layer-moves-down}
    In every round of \Cref{algorithm:gsgs-algo}, the top layer moves at least 1/2 unit in the negative direction of the $y$-axis.
\end{lemma}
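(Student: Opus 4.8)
The plan is to pin down exactly which robots sit on the top layer, show every one of them is \imped\ with a move that strictly lowers its $y$-coordinate, and then aggregate over a round. Fix a round, let $L$ be the top layer at the round's start, and let $T$ be the set of robots at height $L$. The first, and only structural, observation I need is that every $i\in T$ has its three upper neighbour cells $v_3^\ell(i),v_3^r(i),v_4(i)$ all empty, since by definition no horizontal layer lies above $L$; hence the only cells around $i$ that can hold a robot are $v_1(i),v_2^\ell(i),v_2^r(i)$.

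Next I would argue that, assuming $n\ge 2$ (for $n=1$ the statement is vacuous), every $i\in T$ is \imped. This is a short case analysis on the occupancy of $\{v_1(i),v_2^\ell(i),v_2^r(i)\}$, using that $v_3^\ell,v_3^r,v_4$ are empty: if $v_1(i)$ is occupied then $i$ satisfies \textsc{Downward} unless both $v_2$ cells are also occupied, in which case it satisfies \textsc{Non-Extreme}; if $v_1(i)$ is empty and exactly one of $v_2^\ell(i),v_2^r(i)$ is occupied then $i$ satisfies \textsc{Downslant-Left} or \textsc{Downslant-Right}; if $v_1(i)$ is empty and both $v_2$ cells are occupied then $i$ satisfies \textsc{Non-Extreme}; and the only remaining case, namely that $i$ has no neighbour at all, is impossible because the visibility graph stays connected (\Cref{lemma:gmrit-llp}) and $n\ge 2$. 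In particular the two extreme non-\imped\ classes are excluded: \textsc{Staying} requires an occupied $v_3$ cell, and \textsc{Terminating} requires $i$ isolated. Finally, in each surviving case the rule of \Cref{algorithm:gsgs-algo} moves $i$ to $v_1(i)$ (a drop of $1$ in $y$) or to $v_2^\ell(i)/v_2^r(i)$ (a drop of $1/2$ in $y$); note that \emph{every} possible move of the algorithm strictly decreases $y$, and never increases it.

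To conclude, I would aggregate over the round, being slightly careful about the order of events. Within the round every robot evaluates its guard at least once; consider the first evaluation of a robot $i\in T$. By that time other members of $T$ may have moved, but since no robot ever moves upward, $i$'s cells $v_3^\ell(i),v_3^r(i),v_4(i)$ are still empty and $i$ is still connected, so the case analysis above still applies and $i$ is still \imped; hence $i$ moves, and from then on stays at height $\le L-\tfrac12$. Since no robot ever rises, no robot that started below $L$ can reach height $L$, and all of $T$ has left height $L$; because grid heights are spaced $1/2$ apart, after the round every robot is at height $\le L-\tfrac12$, i.e.\ the top layer has descended by at least $1/2$.

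The part that needs the most care is the case analysis in the second step: getting the \textsc{Extreme}/\textsc{Downward}/\textsc{Non-Extreme} bookkeeping exactly right and cleanly ruling out \textsc{Staying} and \textsc{Terminating} for top-layer robots (this rests only on the three upper cells being empty, preserved because the algorithm is monotone downward, together with connectivity from \Cref{lemma:gmrit-llp}). Under full asynchrony one should additionally note that a robot acting on a stale snapshot of its neighbourhood still sees itself as \imped; but this is precisely the monotonicity afforded by the $\prec$-lattice structure (the asynchrony remark following \Cref{definition:ll-problem}) combined with \Cref{lemma:gmrit-llp}, so no extra argument is required beyond observing it.
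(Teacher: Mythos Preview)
The paper does not supply its own proof of this lemma: it is quoted verbatim from \cite{Goswami2022} and left unproved here, so there is no in-paper argument to compare against. Your proof is correct and is the natural one: every top-layer robot has its three upper cells empty, the case split on $\{v_1,v_2^\ell,v_2^r\}$ cleanly yields \textsc{Downward}, \textsc{Downslant}, or \textsc{Non-Extreme} (with \textsc{Staying} and \textsc{Terminating} excluded by the empty upper cells and connectivity from \Cref{lemma:gmrit-llp}), and since every rule strictly decreases $y$ the round-level aggregation goes through. This is almost certainly the same argument given in the original source.
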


\begin{corollary}\label{corollary:never-step-out}(From \Cref{lemma:lower-slants} and \Cref{lemma:lr-layers})
    The robots will never step out of the polygon $AB'QC'D$.
\end{corollary}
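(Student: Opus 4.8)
The plan is to realize the pentagon $AB'QC'D$ as an intersection of five half-planes, each of which is an invariant of \Cref{algorithm:gsgs-algo}, and then to observe that the configuration from which the execution starts already lies in that intersection. The five bounding lines are the top layer (supporting the edge $AD$), the left layer (supporting $AB'$), the right layer (supporting $C'D$), the bottom $l2r$ slant (supporting $B'Q$), and the bottom $r2l$ slant (supporting $C'Q$). First I would record the purely geometric fact that, with $A,B',Q,C',D$ defined as in \Cref{subsection:gsgs-definitions}, these five lines meet pairwise in exactly those five points, in the cyclic order $A, B', Q, C', D$, and that the resulting pentagon is convex; hence it equals the intersection $H$ of the half-planes ``at or below $AD$'', ``at or to the right of the left layer'', ``at or to the left of the right layer'', ``on the robots' side of (or on) the bottom $l2r$ slant'', and ``on the robots' side of (or on) the bottom $r2l$ slant''.

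Next I would check two remaining facts. (i) In the state $s$ with respect to which $A,B',Q,C',D$ are defined, every robot lies in $H$: this is immediate from the definitions, since no robot is above the top layer, left of the left layer, right of the right layer, left of the bottom $l2r$ slant, or right of the bottom $r2l$ slant. (ii) Each of the five inequalities defining $H$ is preserved by every move of \Cref{algorithm:gsgs-algo}. For the top layer, every action sends a robot to $v_1(i)$, $v_2^\ell(i)$, or $v_2^r(i)$, each of strictly smaller $y$-coordinate than $i$, so the $y$-coordinates are non-increasing along any computation and no robot ever rises above $AD$ (this is also subsumed by \Cref{lemma:horizontal-layer-moves-down}). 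For the left and right layers, preservation is exactly \Cref{lemma:lr-layers}. For the two bottom slants, \Cref{lemma:lower-slants} states that neither slant ever changes, so in every reachable state each slant is the same (fixed) leftmost negative / rightmost positive slant through a robot, whence every robot stays on or to the right of the bottom $l2r$ slant and on or to the left of the bottom $r2l$ slant. Since $s\in H$ and $H$ is maintained step by step, every state reachable from $s$ keeps all robots inside $H=AB'QC'D$.

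I expect the only real friction to be the bookkeeping of the first paragraph: confirming that the five lines do close up into the convex pentagon $AB'QC'D$ with the stated vertices (in particular that $B'$ lies on segment $AB$ below $A$, that $C'$ lies on $CD$ below $D$, and that $Q$ lies below both $B'$ and $C'$), noting that the bottom layer $BC$ need not appear although it too bounds the robots (it simply supports no facet of this pentagon), and checking that the argument degrades gracefully when a bottom slant is an imaginary line rather than a line of $G$. The dynamical content, in contrast, is delivered entirely by \Cref{lemma:lower-slants}, \Cref{lemma:lr-layers}, and the trivial observation that \Cref{algorithm:gsgs-algo} never moves a robot upward.
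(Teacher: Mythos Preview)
Your proposal is correct and matches the paper's approach: the paper states this as an immediate corollary of \Cref{lemma:lower-slants} and \Cref{lemma:lr-layers} without writing out a proof, and your half-plane decomposition is exactly the natural way to unpack that citation. You also correctly identify the top-layer invariant (robots never move upward), which the paper's corollary does not explicitly cite but which is indeed needed to close off the fifth side $AD$; this is implicit in the paper's setup and trivially follows from the move set of \Cref{algorithm:gsgs-algo}.
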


\begin{theorem}\label{theorem:gathering-point}
\Cref{algorithm:gsgs-algo} is a lattice-linear self-stabilizing algorithm for the \gmrit problem on $n$ robots executing asynchronously.
\end{theorem}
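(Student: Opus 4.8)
The plan is to check, in turn, the clauses of \Cref{definition:ll-algos} and the two extra requirements of a self-stabilizing LLA, with $\mathcal{P}$ taken to be $\forall i:\lnot\textsc{\Imped-GSGS}(i)$, and then to obtain asynchrony-tolerance from the \emph{Remark} after \Cref{definition:ssll-problem}. Most of the work has already been isolated into the preceding lemmas, so the theorem is mainly an assembly.

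First I would fix the decomposition of the state space into lattices. Since \gmrit is not a lattice-linear problem (the two-robot example of \Cref{section:gmrit}), \Cref{algorithm:gsgs-algo} must \emph{impose} the order. For a reachable state $s$, the bottom $l2r$ slant and the bottom $r2l$ slant, hence their intersection $Q$, are invariant by \Cref{lemma:lower-slants}, and the visibility graph stays connected by \Cref{lemma:gmrit-llp}. I would therefore index the classes $S_k$ by these two lower slants (equivalently by the eventual gathering point $Q_k$), and inside each $S_k$ let \Cref{algorithm:gsgs-algo} totally order the local states of every robot $i$ by the position of $i$ along the forced downward/inward trajectory that the rules prescribe for it, with the position nearest $Q_k$ being $\prec_l$-largest. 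The partial order this induces on global states is then a $\prec$-lattice (\Cref{definition:<-lattice}); the classes are pairwise disjoint because $Q$ is an invariant; and since every connected configuration is a legal initial state and the algorithm preserves connectivity, $\bigcup_k S_k = S$. The problem is ``deemed solved'' precisely when $\mathcal{P}$ holds: a gathered state satisfies $\mathcal{P}$ trivially (a lone robot has no neighbours, so none of \textsc{Downward}, \textsc{Downslant-Right}, \textsc{Downslant-Left}, \textsc{Non-Extreme} can hold), and conversely, in any non-gathered connected state a short case check on the topmost-then-leftmost robot $i$ shows that $i$ is \textsc{Non-Extreme} \imped (if both $v_2^r(i)$ and $v_2^\ell(i)$ are occupied) and otherwise extreme but neither terminating nor staying, hence \textsc{Downward} or \textsc{Downslant} \imped --- so $\lnot\mathcal{P}(s)$. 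Lattice-linearity of $\mathcal{P}$ inside each $S_k$, i.e. $\lnot\mathcal{P}(s)\Rightarrow\exists i:\textsc{\Imped}(i,s,\mathcal{P})$, together with the last clause of \Cref{definition:ll-algos}, is exactly what \Cref{lemma:gmrit-llp} establishes: for every \imped robot $i$ its case analysis exhibits a neighbour that cannot move until $i$ does, which is precisely the statement that $i$ cannot keep its local state on the way to any state satisfying $\mathcal{P}$.

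For self-stabilization I would invoke the confinement and progress lemmas. By \Cref{corollary:never-step-out} every robot stays inside the polygon $AB'QC'D$, whose bottom apex $Q$ is fixed; by \Cref{lemma:horizontal-layer-moves-down} the top layer drops by at least $1/2$ per round, while by \Cref{lemma:lr-layers} and \Cref{lemma:lower-slants} the left layer, right layer and lower slants never move outward. Hence the polygon strictly shrinks each round, and, the initial configuration being finite, it collapses to the single point $Q$ after finitely many rounds; there all robots are gathered, $\mathcal{P}$ holds, and no robot is \imped, so no further move occurs (closure and silence). Because the gathered-at-$Q$ state dominates under $\prec_g$ every state of its class $S_k$ (each robot occupies the $\prec_l$-maximal position $Q$), it is $\sup(S_k)$; thus $\mathcal{P}(\sup(S_k))=true$ for all $k$, and with $\bigcup_k S_k=S$ this makes \Cref{algorithm:gsgs-algo} a self-stabilizing LLA.

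Asynchrony then comes for free: by the \emph{Remark} after \Cref{definition:ssll-problem}, a robot $i$ acting on stale snapshots perceives some $s'\preceq s$ with $s$ the current state; if $i$ was \imped in $s'$ and still occupies $s[i]=s'[i]$ it is still \imped in $s$, so its prescribed move remains correct and advances the $\prec$-lattice, and the proof of \Cref{lemma:gmrit-llp} (cases 4 and 6) already checks that a move made from an outdated view does not disconnect the visibility graph; the per-move snapshot assumption makes each robot's own view self-consistent. Convergence under asynchrony follows because each $S_k$ is a $\prec$-lattice of finite height (bounded by the round count above) in which every non-top state contains an \imped robot that must climb. The step I expect to be the real obstacle is the second one --- making the partition $\{S_k\}$ and the order $\prec_l$ precise enough that ``\Cref{algorithm:gsgs-algo} induces a $\prec$-lattice'' is literally true: one must argue that a robot's reachable positions are genuinely totally ordered (no two incomparable positions equidistant to $Q$) and that $\sup(S_k)$ is actually reachable from, hence an upper bound of, every state of $S_k$. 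Everything else reduces to citing \Cref{lemma:gmrit-llp}, \Cref{corollary:never-step-out}, \Cref{lemma:horizontal-layer-moves-down} and the asynchrony Remark.
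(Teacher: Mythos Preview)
Your proposal is correct and rests on the same ingredients as the paper --- \Cref{lemma:gmrit-llp} for lattice-linearity of $\mathcal{P}$, \Cref{lemma:lower-slants} and \Cref{corollary:never-step-out} for confinement, and \Cref{lemma:horizontal-layer-moves-down} for progress --- so in that sense the approaches coincide.

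The difference is one of granularity rather than strategy. The paper's proof is a four-line convergence argument: it cites \Cref{lemma:lower-slants} (slants invariant), \Cref{corollary:never-step-out} (robots confined to $AB'QC'D$), and \Cref{lemma:horizontal-layer-moves-down} (top layer descends), and concludes that the robots gather at $Q$. It does \emph{not} explicitly construct the partition $\{S_k\}$, define $\prec_l$, or verify the clauses of \Cref{definition:ll-algos} one by one; the lattice-linearity claim is delegated wholesale to \Cref{lemma:gmrit-llp}, and ``self-stabilizing'' is taken to follow once convergence to $Q$ from any connected state is shown. Your version is more scrupulous about matching the formal definition, and the obstacle you flag --- making $\prec_l$ a genuine total order on each robot's reachable positions and identifying $\sup(S_k)$ --- is real but simply not addressed in the paper's proof. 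What you gain is a proof that actually discharges the definition; what the paper gains is brevity, at the cost of leaving the lattice structure implicit.
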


\begin{proof}
    From \Cref{lemma:lower-slants}, we have that bottom $l2r$ slant and bottom $r2l$ slant do not change. From \Cref{corollary:never-step-out}, we have that the robots will never step out of the polygon $AB'C'DQ$.
    From \Cref{lemma:horizontal-layer-moves-down}, we have that the top layer moves down by at least half a unit in the negative direction of $y$-axis. Thus we have that the robots converge at the point of intersection of the bottom $l2r$ slant and the bottom $r2l$ slant, and the robots will eventually gather at that point.
\end{proof}

\begin{corollary}\label{corollary:gathering-point}(From \Cref{lemma:horizontal-layer-moves-down}, \Cref{corollary:never-step-out} and \Cref{theorem:gathering-point})
    The point, $Q$, where the robots gather, can be uniquely determined from the initial global state.
\end{corollary}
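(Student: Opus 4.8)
The plan is to observe that $Q$ is, by construction, completely determined by two geometric features of a global state --- the bottom $l2r$ slant and the bottom $r2l$ slant --- and then to argue that (i) these two features are computable directly from the multiset of robot positions in the initial state, and (ii) they are invariant under the execution of \Cref{algorithm:gsgs-algo}, so the point computed at time $0$ is exactly the point at which the robots provably gather.

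First I would recall the relevant definitions from \Cref{subsection:gsgs-definitions} and the paragraph preceding \Cref{observation:imped-5-6}: the bottom $l2r$ slant of $s$ is the leftmost negative slant passing through a robot of $s$, the bottom $r2l$ slant is the rightmost positive slant passing through a robot of $s$, and $Q$ is their point of intersection. Both slants are well-defined functions of the $n$ robot positions in $s$ alone, hence so is $Q$. Next I would invoke \Cref{lemma:lower-slants} to conclude that neither slant is altered by any move the algorithm makes, so the value of $Q$ does not depend on the time step at which it is evaluated --- in particular it equals the $Q$ one would compute from the initial state.

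Then, combining \Cref{corollary:never-step-out} (the robots stay inside $AB'QC'D$), \Cref{lemma:horizontal-layer-moves-down} (the top layer descends by at least $1/2$ unit per round, so the depth of the bounding region strictly decreases and is eventually $0$, forcing the left and right layers and the two bottom slants to pinch together), and \Cref{theorem:gathering-point} (the algorithm is self-stabilizing and the robots converge precisely at the intersection of the two bottom slants), I would conclude that the robots gather at $Q$ and at no other vertex. Finally, since \Cref{theorem:gathering-point} establishes lattice-linearity and therefore convergence to this same terminal configuration regardless of the asynchronous schedule, the gathering point is the \emph{same} $Q$ for every execution starting from $s$; computing the bottom $l2r$ and bottom $r2l$ slants of $s$ and intersecting them predicts it uniquely.

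I do not expect a real obstacle here: the heavy lifting is in \Cref{lemma:lower-slants} and \Cref{theorem:gathering-point}, and this corollary is essentially the remark that the data defining $Q$ is time-invariant and schedule-independent. The only point requiring a line of care is confirming that the point named $Q$ in the statement coincides with the convergence point asserted in \Cref{theorem:gathering-point} --- that the descending top layer and the pinching left/right layers actually collapse onto the intersection of the two bottom slants rather than some other vertex --- and this is immediate from the definition of $Q$ together with \Cref{corollary:never-step-out}.
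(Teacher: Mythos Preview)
Your proposal is correct and matches the paper's approach: the paper treats this corollary as immediate from the cited results (no separate proof is given), and your argument simply spells out the chain --- $Q$ is the intersection of the two bottom slants, these are invariant (\Cref{lemma:lower-slants}, which underlies \Cref{theorem:gathering-point}), the robots stay in $AB'QC'D$ and the top layer descends until everything collapses to $Q$. The only minor deviation is that you explicitly invoke \Cref{lemma:lower-slants} for the slant invariance, whereas the paper's corollary header cites only \Cref{lemma:horizontal-layer-moves-down}, \Cref{corollary:never-step-out}, and \Cref{theorem:gathering-point}; this is harmless since the invariance is already baked into the proof of \Cref{theorem:gathering-point}.
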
  

\subsection{Time Complexity Properties}

In \cite{Goswami2022}, authors showed that \Cref{algorithm:gsgs-algo} converges in $2.5(n+1)$ rounds.
Based on \Cref{corollary:gathering-point} which identifies a predictable gathering point, 
we show that a maximum of $2n$ rounds is sufficient, which is a tighter bound. 

\begin{theorem}\label{theorem:time-gsgs}
    \Cref{algorithm:gsgs-algo} converges in $2n$ rounds.
\end{theorem}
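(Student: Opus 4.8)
The plan is to bound separately the number of rounds needed to bring every robot down to the bottom $l2r$/$r2l$ slants (which, by Lemma on lower slants, are fixed), and then to argue that once all robots lie on these two slants the configuration collapses to $Q$ quickly. First I would use Lemma on the top layer: every round the top layer descends by at least $1/2$ unit. The depth of the initial configuration — the vertical distance from the top layer to the bottom layer — is at most $n-1$ half-units in the worst case, but more usefully, since the $n$ robots occupy at most $n$ distinct horizontal layers, the top layer can descend at most... here I need to be careful about the geometry. The key observation is that the robots all converge to $Q$, the intersection of the two bottom slants (Theorem on gathering point and its Corollary), and $Q$ lies below the bottom layer $BC$ by at most half the width in half-units. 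So I would express the total vertical drop of the top layer needed to reach $Q$ as (initial depth) $+$ (distance from bottom layer down to $Q$), and bound each of these in terms of $n$.

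The cleaner route: argue that within the bounding polygon $AB'QC'D$ (Corollary: robots never leave it), the number of horizontal layers strictly between the top layer and $Q$ is at most $2n$ half-units, hence by the half-unit-per-round descent of the top layer (Lemma on horizontal layer moving down), after at most $2n$ rounds the top layer coincides with $Q$, and since no robot is ever below $Q$, all robots are at $Q$. To make the ``$2n$ half-units'' bound precise I would use that the $n$ robots span at most $n$ distinct values along the $p$-axis direction and at most $n$ distinct values along the $n$-axis direction, so the slant-distances $|B'Q|$ and $|C'Q|$ are each at most $n-1$; combined with the fact that the vertical extent of the polygon is governed by these slant lengths (the two bottom slants meet at $Q$ at the bottom, and $A$, $D$ sit at height determined by $|AB'|+|B'Q|$-type sums), the total vertical span from the top layer to $Q$ is at most about $n$ full units $=2n$ half-units. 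Then one round of guaranteed $1/2$-unit descent per round yields the $2n$-round bound, and I would close by noting that the descent cannot ``overshoot'' because the bottom slants are frozen, so the process terminates exactly when the top layer hits $Q$.

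The main obstacle I anticipate is pinning down the worst-case vertical distance from the initial top layer to $Q$ in terms of $n$ and showing it is at most $2n$ half-units rather than something larger (a naive count could give $\sim 2.5n$, which is exactly the weaker bound of~\cite{Goswami2022}). The improvement must come from the tighter polygon $AB'QC'D$ of Corollary~\ref{corollary:never-step-out} versus the looser $ABPCD$: because $Q$ (intersection of the \emph{bottom} slants) sits strictly higher than $P$ (intersection of the slants through $B$ and $C$), the vertical drop that must be ``paid for'' is shorter. I would therefore carefully compare the $y$-coordinates of $P$ and $Q$, show the saving is at least $\sim 0.5n$ half-units, and conclude. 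A secondary subtlety is handling the first round (before which the top layer may contain several robots) and ensuring the ``at least $1/2$ unit per round'' bound of Lemma~\ref{lemma:horizontal-layer-moves-down} applies uniformly until gathering completes; I would invoke that lemma as a black box and only verify its hypothesis (the top layer is nonempty and not yet at $Q$) persists until the end.
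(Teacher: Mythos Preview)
Your overall framework is right and matches the paper: bound the vertical distance from the top layer down to $Q$, then invoke \Cref{lemma:horizontal-layer-moves-down} (half a unit per round) to get the round bound. The gap is in the step you yourself flag as ``the main obstacle'': you never actually establish that this vertical distance is at most $n$ units. Your proposed route via slant spans does not close the gap. Even granting that $|B'Q|$ and $|C'Q|$ are each at most $n-1$ slant-steps, that only controls the vertical drop from $B'$ (or $C'$) down to $Q$, contributing at most about $(n-1)/2$ vertical units. You still need to bound the vertical sides $|AB'|$ and $|DC'|$, and nothing in your argument does this; a naive bound on $|AB'|$ alone could be close to $n$, which would push you back toward the $2.5n$ regime. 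Saying the total is ``about $n$ full units'' is exactly the assertion that needs proof.

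The paper's proof supplies the missing idea, and it is \emph{not} the slant-span argument. It introduces $h_\ell=|AB'|$, $h_r=|DC'|$, and $w=|AD|$, and uses connectivity to get $h_\ell+h_r\le n$ and $w\le n$. Taking without loss of generality $h_\ell\le h_r$, this gives $h_\ell\le n/2$. Then it decomposes the polygon $AB'QC'D$ horizontally at the level of $B'$ into a rectangle $AB'JD$ of depth $h_\ell\le n/2$ and a lower region $B'QC'J$ whose depth is at most $w/2\le n/2$ (because the slant $B'E$ spans width $w$, hence height $w/2$, and $Q$ lies no lower than $E$). Summing gives total depth at most $n$, and then \Cref{lemma:horizontal-layer-moves-down} yields $2n$ rounds. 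The crucial inequality $h_\ell+h_r\le n$ (which is what forces $\min(h_\ell,h_r)\le n/2$) is precisely what your slant-distance bookkeeping does not provide.
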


\begin{proof}

We use \Cref{figure:infinite-triangular-grid} to discuss convergence of robots in \Cref{algorithm:gsgs-algo}. As discussed in \Cref{subsection:gsgs-definitions} and \Cref{subsection:gsgs-lattice-linearity}, let $A$, $B'$, $Q$, $C'$ and $D$ be the points obtained by pairwise intersection of the top layer, left layer, bottom $l2r$ slant, bottom $r2l$ slant and right layer. 
Let $h_\ell$ be the depth of the line segment $AB'$, $h_r$ be the depth of the line segment $C'D$, and $w$ be the width of the line segment $AD$. Note that a unit of length of the depth of $AB'$ or $C'D$ is $\sqrt{3}$ times a unit of length of the width of $AD$ due to the geometry of $G$.
Since the robots form a connected graph, $w \leq n$. And, if $w > 0$ then $h_\ell+h_r\leq n$. If $w=0$, we define $h_\ell=0$ and $h_r=n$.

In the case where $w=0$, it can be clearly observed that \Cref{algorithm:gsgs-algo} converges in $n$ rounds. 
Next, we consider if $w>0$. Without the loss of generality, let $h_r\geq h_\ell$. Thus, $h_\ell\leq n/2$. 

Let $E$ be the point of intersection between the bottom $l2r$ slant and the right layer
(cf. \Cref{figure:infinite-triangular-grid}).
We draw a horizontal line (perpendicular to the $y$-axis) through $B'$ and use $J$ to denote its intersection with $DC'$.
Thus, the depth of $AB'JD$ is $h_\ell$. 
Additionally, observe that the length of $B'E$ on the $n$-axis is $w$. Thus the height of $JE$ is $w/2$ units on the $y$-axis.
This means that the depth of $B'EJ$ is $w/2$. By construction of $E$, the depth of $B'QC'J$ is upper bounded by the depth of $B'EJ$. Thus, the depth of $B'QC'J$ cannot exceed $w/2\leq n/2$ units.

Thus, the total depth of $AB'QC'D$ is equal to the sum of the depth of $AB'JD$ and the depth of $B'QC'J$, which is 
upper bounded by $n/2+n/2=n$ units.
From \Cref{lemma:horizontal-layer-moves-down}, the total number of rounds required for the robots to gather is upper bounded by $2n$ moves.
\end{proof} 

\section{Revised Algorithm for \gmrit}\label{section:gsgs-new-algo}

In this section, we present a revised algorithm that simplifies the proof of lattice-linearity. This algorithm is based on the difficulties involved in the proof of \Cref{lemma:gmrit-llp} where we needed to consider the possible actions taken by the neighbours of an \imped robot. Our proof would have been simpler
if all the neighbours of an \imped robot $i$ would not be allowed to move until $i$ moves.
Additionally, from \Cref{observation:imped-5-6}, we have that if a robot $j$, neighbouring to an \imped node $i$, is \imped, then $i$ or $j$ fall in case $5$ or case $6$.

These issues can be alleviated by removing cases 5 and 6 from the algorithm.
The macros that we utilize are as follows. A robot is \textit{downward \imped} if its local state is one of those represented in cases 1, 2, 3 or 4 (and their mirror images; cf \Cref{figure:local-states} (b)). A robot is is \textit{downslant \imped} if its local state is that represented in case 7.
$$
\begin{array}{|l|}
    \hline 
    \textsc{Downward-II}(i)\equiv (\textsc{At}(i,\{v_1\})\land \lnot\textsc{At}(i,\{v_3^\ell, v_4, v_3^r\}))\lor\\
    \quad\quad \textsc{Only-At}(i,\{v_2^\ell, v_2^r\}).\\
    \textsc{Downslant-Left-II}(i)\equiv \textsc{Only-At}(i,\{v_2^\ell\}).\\
    \textsc{Downslant-Right-II}(i)\equiv \textsc{Only-At}(i,\{v_2^r\}).\\
    \hline 
\end{array}
$$

The revised algorithm is as follows. A downward impedensable robot moves to $v_1(i)$ location, and a downslant \imped robot moves to $v_2(i)$ location.

\begin{algorithm}\label{algorithm:gsgs-new-algo}
    Rules for robot $i$.
\end{algorithm}
\begin{center}
    $\begin{array}{|l|}
        \hline 
        \textsc{Downward-II}(i)\longrightarrow move(i,v_1(i)).\\
        \textsc{Downslant-Right-II}(i)\longrightarrow move(i,v_2^r(i)).\\
        \textsc{Downslant-Left-II}(i)\longrightarrow move(i,v_2^\ell(i)).\\
        \hline 
    \end{array}
    $
\end{center}

In \Cref{algorithm:gsgs-new-algo}, because of the removal of cases 5 and 6, any robot around an \imped robot does not move. Thus lattice-linearity of this algorithm can be visualized more intuitively. Consequently, we have the following lemma.

\begin{lemma}
    The predicate $\forall i: \lnot(\textsc{Downward-II}(i)$ $\lor$
        $\textsc{Downslant-Right-II}(i)$ $\lor$
        $\textsc{Downslant-Left-II}(i))$,
    is a lattice-linear predicate on $n$ robots, and the visibility graph does not get disconnected by the actions under \Cref{algorithm:gsgs-new-algo}.
\end{lemma}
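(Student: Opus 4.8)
The plan is to reprise the argument behind \Cref{lemma:gmrit-llp} while cashing in the simplification announced just above this lemma: with cases $5$ and $6$ deleted, \emph{no} neighbour of an \imped robot can move until that robot moves. The proof then splits into (i) showing that $\mathcal{P}\equiv\forall i:\lnot(\textsc{Downward-II}(i)\lor\textsc{Downslant-Right-II}(i)\lor\textsc{Downslant-Left-II}(i))$ is lattice-linear with respect to the $\prec$-lattice that \Cref{algorithm:gsgs-new-algo} induces on the reachable states (\Cref{definition:ll-algos}), and (ii) showing the visibility graph stays connected. Both halves rest on a single claim, the \emph{freezing claim}: if a robot $i$ is \imped in a state $s$, then every robot $j$ occupying a location around $i$ fails all three macros, and keeps failing them as long as $i$ does not move.

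To establish the freezing claim I would run two finite checks against the macro definitions. First, a robot $j$ with a robot at any of $v_3^\ell(j)$, $v_4(j)$, $v_3^r(j)$ satisfies none of the three macros: such a location lies in the forbidden set of the first disjunct of \textsc{Downward-II}, and it breaks the \textsc{Only-At} condition (over a subset of $\{v_2^\ell,v_2^r\}$) of its second disjunct and of both \textsc{Downslant} macros. Second, in every \imped local state of $i$ (cases $1$--$4$, case $7$, and their mirrors) the occupied neighbours of $i$ all lie in $\{v_1(i),v_2^\ell(i),v_2^r(i)\}$ -- again immediate from the macros. Combining these with the symmetry of the naming convention (a robot sits at $v_1(i)$ exactly when $i$ sits at its $v_4$; at $v_2^\ell(i)$ exactly when $i$ sits at its $v_3^r$; at $v_2^r(i)$ exactly when $i$ sits at its $v_3^\ell$), every neighbour $j$ of an \imped $i$ has $i$ at one of $v_3^\ell(j),v_4(j),v_3^r(j)$; by the first check $j$ then fails all three macros, and it keeps doing so -- hence cannot move -- until $i$ vacates that location.

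Given the freezing claim, lattice-linearity follows along \Cref{definition:ll-predicate}/\Cref{definition:ll-algos}. I take $\prec_l$ to order a robot's positions by strictly decreasing $y$-coordinate; since every move of \Cref{algorithm:gsgs-new-algo} strictly lowers $y$, the positions a robot actually visits are totally ordered. Then for any suboptimal reachable $s$ I pick a robot $i$ satisfying one of the three macros (one exists as $\lnot\mathcal{P}(s)$) and verify $\textsc{\Imped}(i,s,\mathcal{P})$: for every $s'\succ s$ with $s'[i]=s[i]$, the robots that make $i$ \imped in $s$ sit strictly below $i$ at $v_1(i),v_2^\ell(i),v_2^r(i)$ and, by the freezing claim, cannot move while $i$ is fixed, so they still occupy those locations in $s'$; hence $i$ satisfies the same macro in $s'$ and $\lnot\mathcal{P}(s')$. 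For connectivity, by the second check the occupied neighbours of an \imped $i$ all lie in $\{v_1(i),v_2^\ell(i),v_2^r(i)\}$, and $i$ moves either to $v_1(i)$ -- still adjacent to the robots at $v_2^\ell(i)$ and $v_2^r(i)$, coinciding with any robot at $v_1(i)$ -- or, in a \textsc{Downslant} case, to its sole occupied $v_2(i)$, merging with its only neighbour; since the asynchronous model changes one robot's state per step and the freezing claim keeps $i$'s (possibly stale) snapshot faithful to its current neighbourhood, $i$'s connected component is unchanged and no edge is lost.

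I expect the only delicate point to be the universal quantifier in $\textsc{\Imped}(i,s,\mathcal{P})$ -- arguing that $i$ stays \imped for \emph{all} $s'\succ s$, which needs both that its pinning robots cannot slip away and that nothing can descend onto $v_3^\ell(i),v_4(i)$, or $v_3^r(i)$ to rescue $i$. This is where deleting cases $5$ and $6$ earns its keep: by \Cref{observation:imped-5-6} a neighbour in case $5$ or $6$ could otherwise move, forcing the case-tracking bookkeeping that complicates the proof of \Cref{lemma:gmrit-llp}; with those cases gone the freezing claim renders every neighbour of an \imped robot inert, and everything else reduces to finite checks over \Cref{figure:local-states}(b).
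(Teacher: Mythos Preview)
Your proposal is correct and follows essentially the same approach as the paper. The paper does not give a standalone proof of this lemma; it simply records the key observation---that with cases 5 and 6 removed, ``any robot around an \imped robot does not move''---and declares the lemma a consequence of the (now simplified) argument of \Cref{lemma:gmrit-llp}. Your freezing claim is precisely that observation, your case analysis over $\{v_1,v_2^\ell,v_2^r\}$ neighbours mirrors the case structure of \Cref{lemma:gmrit-llp}, and your connectivity argument is the same ``$i$ stays adjacent to its former neighbours'' reasoning the paper uses there; you have just written out what the paper leaves implicit.
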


\Cref{lemma:horizontal-layer-moves-down} shows the top-layer moves down in each round. This proof is not affected by the removal of cases 5 and 6, as the robot executing in cases 5 or 6 is \textit{not} a top-layer robot. 
Consequently, \Cref{algorithm:gsgs-new-algo} follows the properties as described in \Cref{lemma:horizontal-layer-moves-down}, \Cref{theorem:gathering-point} and hence \Cref{theorem:time-gsgs}. 
Thus, we obtain the following theorem.

\begin{theorem}
    \Cref{algorithm:gsgs-new-algo} is a lattice-linear self-stabilizing algorithm for \gmrit problem on $n$ robots executing asynchronously. It converges in $2n$ rounds, and the robots gather at $Q$.
\end{theorem}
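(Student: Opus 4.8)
The plan is to establish the final theorem by reducing it, step by step, to results that have already been proved for \Cref{algorithm:gsgs-algo}, after first verifying that \Cref{algorithm:gsgs-new-algo} really is a restriction of the original rule set. First I would observe that the guards $\textsc{Downward-II}$, $\textsc{Downslant-Left-II}$ and $\textsc{Downslant-Right-II}$ correspond exactly to cases 1--4 (and mirrors) and case 7 of \Cref{figure:local-states}(b), i.e. to the sub-disjunction of $\textsc{\Imped-GSGS}$ obtained by deleting cases 5 and 6 ($\textsc{Downslant}$ with a robot also present on $v_3$); and that in each of these cases the prescribed move ($v_1$ for downward, $v_2^r$ or $v_2^\ell$ for downslant) coincides with the move of \Cref{algorithm:gsgs-algo}. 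Hence every execution of \Cref{algorithm:gsgs-new-algo} is also a legal execution of \Cref{algorithm:gsgs-algo}, so Lemmas~\ref{lemma:lr-layers}, \ref{lemma:horizontal-layer-moves-down}, \ref{lemma:lower-slants}, \Cref{corollary:never-step-out} and \Cref{theorem:gathering-point} all apply verbatim to the revised algorithm, as the text preceding the theorem already notes.

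Next I would assemble the three claims of the theorem. Lattice-linearity and the connectivity of the visibility graph are exactly the preceding (unnumbered) lemma, so that piece is free. For self-stabilization and the gathering point $Q$: \Cref{theorem:gathering-point} and \Cref{corollary:gathering-point} give that the robots of \Cref{algorithm:gsgs-algo} converge to the uniquely determined point $Q$ (the intersection of the bottom $l2r$ and bottom $r2l$ slants), staying inside $AB'QC'D$ and with the top layer strictly descending; since every run of the revised algorithm is such a run, the same conclusion holds, and silence follows because $Q$ is the unique optimal state with no \imped robot. For the $2n$ bound I would invoke \Cref{theorem:time-gsgs} directly: its proof depends only on \Cref{lemma:horizontal-layer-moves-down} and \Cref{corollary:never-step-out}, both of which hold for \Cref{algorithm:gsgs-new-algo}, so the same geometric argument (total depth of $AB'QC'D$ at most $n$, top layer descending $\tfrac12$ per round) yields at most $2n$ rounds.

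The one point that needs a little care — and which I expect to be the main obstacle — is re-proving the fresh lemma's claim that \Cref{algorithm:gsgs-new-algo} induces a lattice-linear predicate without disconnecting the visibility graph, since the \emph{statement} that ``any robot around an \imped robot does not move'' has to actually be checked rather than merely asserted. Concretely, for a robot $i$ satisfying one of the three new guards, I would go through its neighbours $j$ and argue each is excluded from all of $\textsc{Downward-II}(j)$, $\textsc{Downslant-Left-II}(j)$, $\textsc{Downslant-Right-II}(j)$: a neighbour directly above or to the side of $i$ fails the $\textsc{Only-At}$ / ``no robot on $v_3,v_4$'' conditions precisely because $i$ itself is present in a disqualifying location. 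This is where the removal of cases 5 and 6 pays off — in the original proof of \Cref{lemma:gmrit-llp}, cases 5 and 6 were exactly the situations where a neighbour could still move and the argument became delicate; with them gone, the neighbour-immobility check is a short finite case analysis. Once that is in hand, \Cref{definition:impedensable-node} is satisfied (the neighbour witnesses that $i$ cannot retain its state) and \Cref{definition:ll-predicate} follows, completing the proof.
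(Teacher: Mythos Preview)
Your plan matches the paper's own argument almost exactly: the paper does not give a separate proof of this theorem but simply remarks that removing cases 5 and 6 leaves \Cref{lemma:horizontal-layer-moves-down} intact (since case-5/6 robots are never top-layer) and that therefore \Cref{theorem:gathering-point} and \Cref{theorem:time-gsgs} carry over; your decomposition into ``lattice-linearity via the preceding lemma, then reuse Theorems~\ref{theorem:gathering-point} and~\ref{theorem:time-gsgs}'' is the same strategy, spelled out in more detail.

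One point of caution: your first-paragraph device ``every execution of \Cref{algorithm:gsgs-new-algo} is a legal execution of \Cref{algorithm:gsgs-algo}'' is fine for the safety lemmas (\Cref{lemma:lower-slants}, \Cref{lemma:lr-layers}, \Cref{corollary:never-step-out}), but it does not by itself transfer the \emph{round}-based liveness statement of \Cref{lemma:horizontal-layer-moves-down}. A round of \Cref{algorithm:gsgs-new-algo} in which a case-5/6 robot evaluates its (false) guard and stays put is not a round of \Cref{algorithm:gsgs-algo}, where that same robot is enabled and must move. The reason \Cref{lemma:horizontal-layer-moves-down} still holds is exactly the one the paper states and you also cite: its proof involves only top-layer robots, and a robot with a neighbour at $v_3$ (cases 5 and 6) is never top-layer. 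Make that the actual justification rather than the sub-execution argument, and your proof goes through.
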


\section{Related Work}\label{section:literature}

\noindent\textit{Lattice-Linearity}: 
Lattice-linearity was introduced in \cite{Garg2020}. Additionally, in \cite{Garg2021,Garg2022}, the authors studied lattice-linearity in, respectively, housing market problem and several dynamic programming problems. We call the problems studied in these papers \textit{lattice-linear problems}.
We introduced eventually lattice-linear algorithms \cite{Gupta2021} and fully lattice-linear algorithms \cite{Gupta2022,Gupta2023} for non-lattice-linear problems.

\noindent\textit{Robot gathering problem on discrete grids}:
In a general case, it is impossible to gather a system of robots if their visibility graph is not a connected graph. One-axis agreement and distance-1 myopia are the minimal capabilities that robots need to converge on a triangular grid \cite{Goswami2022}.

A system of robots with minimal capabilities has been studied with several output requirements, including gathering \cite{Goswami2022,DAngelo2016,Flocchini2005}, dispersion \cite{Augustine2018}, arbitrary pattern formation \cite{Bose2020}. Gathering of robots has been studied more recently in \cite{Bhagat2015,Klasing2010}. We focus on systems of robots on grids, mainly the papers that study gathering.

Robots placed on an infinite rectangular grid were studied in \cite{Poudel2021}, where the authors presented two algorithms for gathering. A synchronous scheduler is assumed and the robots require, respectively, distance-2 and distance-3 visibility.
Moreover, under the latter algorithm, robots may not gather at one point but will gather a horizontal line segment of unit length.

Robots placed on an infinite triangular grid were studied in \cite{Cicerone2021}, where the authors provided an algorithm to form any arbitrary pattern.
They require full visibility.
Their algorithm works only when the the initial global state is asymmetric.
Authors of \cite{Shibata2021} have studied gathering problem of 7 robots -- initially, 6 of them form a hexagon and one robot is present at the centre of that hexagon. They require the initial state to form a connected visibility graph; the system finally reaches a global state where the maximum distance between two robots is minimized. A synchronous scheduler is assumed.
In \cite{DAngelo2016}, authors  characterized the problem of gathering on a tree and finite grid.

This paper studies an algorithm \cite{Goswami2022} that converges assuming that the robots are myopic, and can use a unidirectional camera, that sees one neighbour at a time. The robots form an arbitrary connected graph initially.

\noindent\textit{Abstractions in Concurrent Computing}: 
Since this paper focuses on asynchronous computations, we also study other abstractions in the context of concurrent systems: non-blocking (lock-free/wait-free), starvation-free and serializability.

An algorithm is \textit{non-blocking} if in a system running such algorithm, if a node fails or is suspended, then it does not result in failure or suspension of another node. 
A non-blocking algorithm is \textit{lock-free} if system-wide progress can be guaranteed. For example, algorithms for implementing lock-free singly-linked lists and binary search tree are, respectively, presented in \cite{Valois1995} and \cite{Natarajan2014}. In such systems, if a read/write request is blocked then other nodes continue their actions normally.

A non-blocking algorithm is \textit{wait-free} if progress can be guaranteed per node. A wait-free sorting algorithm is studied in \cite{Shavit1997}, which sorts an array of size $N$ using $n\leq N$ computing nodes, and an $O(n)$ time wait-free approximate agreement algorithm is presented in \cite{Attiya1994}. In such systems, in contrast to lock-free systems, it must be guaranteed that all nodes make progress individually.

A key difference between non-blocking algorithms and asynchronous algorithms is the \textit{system-perspective} for which they are designed. To understand this, observe that from a perspective, the lattice-linear, asynchronous, algorithm considered in this paper is wait-free. Each robot checks the location of its neighbouring robots. Then, it executes an action, if it is enabled, without synchronization. More generally, in an asynchronous algorithm, each node reads the state of its relevant neighbours to check if the guard evaluates to true. It can, then, update its state without coordination with other nodes. 

That said, the goal of asynchronous algorithms is not the progress/blocking of individual nodes 
(e.g., success of insert request in a linked list and a binary search tree, respectively, in \cite{Valois1995} and \cite{Natarajan2014}).
Rather it focuses on the progress from the perspective of the system, i.e., the goal is not about the progress of an action by a robot/node but rather that of the entire system.  
For example, under the GSGS algorithm, if one of the robots is slow or does not move, the robots will not converge. 
However, the robots can run without any coordination and they can use a unidirectional camera, in contrast to using an omnidirectional camera or taking a snapshot of the entire arena.
In fact, the notion of \imped (recall that we rewrote the GSGS algorithm such that, in any global state, all enabled robots are \imped) captures this. An \imped robot has to make progress in order for the system to make progress.

Wait-free computations are also studied from a system perspective as well, e.g., in \cite{Castaneda2018}.  
They assume a finite grid, ability of a robot to observe the entire arena, and ability to communicate with other robots (via messages or shared registers). 
Even under these assumptions, they show that the problem is unsolvable if all we know is that the graph is connected and a robot can crash. 
Our paper demonstrates that asynchronous execution is possible without the ability to communicate or observe the entire arena
subject to the constraint that every enabled robot eventually makes a move.

\textit{Starvation} happens when requests of a higher priority prevent a request of lower priority from entering the critical section indefinitely. To prevent starvation, algorithms are designed such that the priority of pending requests are increased dynamically. Consequently, a low-priority request eventually obtains the highest priority. Such algorithms are called \textit{starvation-free} algorithms. For example, authors of \cite{Kim2005} and \cite{Attiya2010}, respectively, present a starvation-free algorithm to schedule queued traffic in a network switch and a starvation-free distributed directory algorithm for shared objects.
Asynchronous algorithms are starvation-free, as long as all enabled processes can execute. If all enabled processes can execute, convergence is guaranteed.

\textit{Serializability} allows only those executions to be executed concurrently which can be modelled as some permutation of a sequence of those executions. In other words, serializability does not allow nodes to read and execute on old information of each other: only those executions are allowed in concurrency such that reading fresh information, as if the nodes were executing in an interleaving fashion, would give the same result. Serializability is heavily utilized in database systems, and thus, the executions performed in such systems are called \textit{transactions}. Authors of \cite{Papadimitriou1979} show that corresponding to several transactions, determining whether a sequence of read and write operations is serializable is an NP-Complete problem. They also present some polynomial time algorithms that approximate such serializability. Authors of \cite{Fle1982} consider the problem in which the sequence of operations performed by a transaction may be repeated infinitely often. They describe a synchronization algorithm allowing only those schedules that are serializable in the order of commitment.

The asynchronous execution considered in this paper is not \textit{serializable}, especially, since the reads can be from an old global state. Even so, the algorithm converges, and does not suffer from the overhead of synchronization required for serializability.

\section{Conclusion}\label{section:conclusion}

In this paper, we show lattice-linearity of the algorithm developed by Goswami et al \cite{Goswami2022} for gathering robots on an infinite triangular grid. This removes the assumptions of synchronization from the algorithm and thus makes a system running this algorithm fully tolerant to asynchrony. We also present a revised algorithm that simplifies the proof of lattice-linearity without losing any of the desired properties (e.g., convergence time, stabilization).

Lattice-linearity implies that the locations, possibly visited by a robot, form a total order. The total order is a result of the fact that we are able to determine all and the only robots in any global state that are \imped, and an \imped robot has only one choice of action. By making this observation, it can also be noticed that we can closely predict the executions that the robots would perform. As a result, we are able to (1) compute the exact arena traversed by the robots throughout the execution of the algorithm (\Cref{lemma:lower-slants} and \Cref{lemma:lr-layers}), and (2) deterministically predict the point of gathering of the robots (\Cref{corollary:gathering-point}).

We also provided a better upper bound on the time complexity of this algorithm. Specifically, we show that it converges in $2n$ rounds, whereas \cite{Goswami2022} showed that a maximum of $2.5(n+1)$ rounds are required.
This was possible due to the observations that followed from the proof of lattice-linearity of this algorithm. 

Being able to design systems that fully tolerate asynchrony has been a desired, albeit unattainable, objective. Lattice-linear systems provide with a deterministic, discrete, guarantee to attain such fault tolerance. Studying whether existing algorithms exploit lattice-linearity is extremely interesting and beneficial. Specifically, it allows us to eliminate costly, sophisticated, assumptions of synchronization from existing systems, instead of redesigning them from scratch.

\bibliography{gsgs.bib}
\bibliographystyle{acm}

\end{document}